\documentclass[12pt, onecolumn, letterpaper]{IEEEtran} 
\usepackage[margin=1in]{geometry}
\usepackage{setspace}
\doublespacing
\usepackage{amsmath,amssymb,amsthm,bm,amsfonts, mathrsfs,multicol}
\usepackage{algorithmic}
\usepackage{algorithm}
\usepackage{array}
\usepackage[caption=false,font=normalsize,labelfont=sf,textfont=sf]{subfig}
\usepackage{textcomp}
\usepackage{stfloats}
\usepackage{url}
\usepackage{verbatim}
\usepackage{graphicx}
\usepackage{tikz-cd}
\usepackage[numbers]{natbib}
\newtheorem{theorem}{Theorem}%[section]
\newtheorem{lemma}[theorem]{Lemma}
\newtheorem{defi}[theorem]{Definition}%[section]
\newtheorem*{defi*}{Definition}
\newtheorem{prop}[theorem]{Proposition}%[section]
\newtheorem{cor}[theorem]{Corollary}%[section]
\newtheorem{example}[theorem]{Example}%[section]
%[section]
\newtheorem{construction}[theorem]{Construction} % Shares numbering with theorem
\newcommand{\etal}{\textit{et al.} }

\DeclareMathOperator{\Span}{Span}
\DeclareMathOperator{\ev}{ev}
\DeclareMathOperator{\ord}{ord}
\usepackage{hyperref}
\begin{document}
\title{Quantum Locally Recoverable Codes \\ via Good Polynomials}

	\author{ Sandeep Sharma, Vinayak Ramkumar,  and Itzhak Tamo\\
		\thanks{
Sandeep Sharma, Vinayak Ramkumar,   and Itzhak Tamo are with the Department of Electrical Engineering--Systems, Tel Aviv University, Tel Aviv, Israel. e-mail:	sk.deepsharma@gmail.com,   vinram93@gmail.com,  tamo@tauex.tau.ac.il. \\
	This work was supported by the European Research Council (ERC grant number 852953).}
	}
\maketitle
\begin{abstract}
Locally recoverable codes (LRCs) with locality parameter $r$ can recover any erased code symbol by accessing $r$ other code symbols. This local recovery property is of great interest in large-scale distributed classical data storage systems as it leads to efficient repair of failed nodes. 
A well-known class of optimal (classical) LRCs are subcodes of Reed-Solomon codes constructed using a special type of polynomials called good polynomials. 
Recently, Golowich and Guruswami initiated the study of quantum LRCs (qLRCs), which could have applications in quantum data storage systems of the future. The authors presented a qLRC construction based on good polynomials arising out of subgroups of the multiplicative group of finite fields. 
In this paper, we present a qLRC construction method that can employ any good polynomial.  
We also propose a new approach for designing good polynomials using subgroups of affine general linear groups.  Golowich and Guruswami also derived a lower bound on the minimum distance of their qLRC under the restriction that $r+1$ is prime. Using similar techniques in conjunction with the expander mixing lemma, we develop minimum distance lower bounds for our qLRCs without the $r+1$ prime restriction.
\end{abstract}
\begin{IEEEkeywords}
 Locally recoverable codes, Good polynomials, Quantum codes, CSS codes
\end{IEEEkeywords}
\section{Introduction}
% \IEEEPARstart{L}{inear} codes
In large-scale classical data storage systems, the failure of a single storage node is a more common occurrence than the catastrophic failure of multiple nodes simultaneously. This observation led to the development of various classes of codes that can repair single-node failure efficiently. Locally recoverable codes (LRCs) are an important family of such erasure codes designed for distributed storage. The goal of LRCs is to minimize the number of nodes that are contacted to repair a failed node. In an LRC with locality $r$, for any code symbol, there exist at most $r$ other symbols from which the value of the code symbol can be determined. 
Some of the early works that investigated codes having local recovery property are \cite{HanL07, HuangCL07, OggierD11, Gopalan, PrakashKLK12, PapailiopoulosD14, SilbersteinRKV13, MazumdarCW14, Tamo, CadambeM15, TamoPD16, TamoBF16}. 

In \cite{Golowich}, Golowich and Guruswami introduced the quantum counterpart of LRCs, which will be referred to as quantum LRCs (qLRCs). Although quantum computation and storage are still in their infancy, it is possible that large-scale quantum data storage could become a reality in the future, thereby making quantum codes like qLRCs inevitable.
The authors of \cite{Golowich} also pointed to the connections between quantum LDPC codes and qLRCs as an additional motivation to study qLRCs. 
 The CSS construction \cite{calderbank1996good, steane1996error, Ketkar} is a well-known method to construct quantum codes from classical codes. As shown in \cite{Golowich}, using CSS framework, one can reduce the problem of constructing qLRCs to that of constructing classical LRCs which contain their dual. 
 
Of particular interest for our paper is the LRC construction in \cite{Tamo}. This work presented a family of LRCs that are subcodes of Reed-Solomon codes. These LRCs attain the optimal value of the minimum distance and require only an alphabet of size comparable to the block length. The qLRC construction in \cite{Golowich} essentially involves modifying a special case of the LRC construction in \cite{Tamo} to make it dual-containing.

Our work aims to generalize the qLRC construction in \cite{Golowich} to leverage the full potential of the LRC construction in \cite{Tamo}. In detail, the LRC in \cite{Tamo} is obtained by evaluating specially constructed polynomials at a chosen set of points in a finite field. This construction relies on the existence of a special type of polynomials associated with evaluation points, called good polynomials. The annihilator polynomials of cosets of a subgroup of the multiplicative group of the field form one class of good polynomials. In \cite{Golowich}, good polynomials of this particular type are used to construct dual-containing LRCs. Many other ways to obtain good polynomials are known in the literature \cite{Tamo, LiuMC18, Micheli, LiuMT20, RuiKai}. For example, annihilator polynomials of cosets of an additive subgroup of the field also yield good polynomials. In this paper, we present a method to construct dual-containing LRCs that can employ any good polynomial. The flexibility to choose good polynomials enables us to construct qLRCs for parameters that are not achievable using the construction in \cite{Golowich}.

Additionally, we propose a general approach for constructing good polynomials, which, to the best of our knowledge, is not previously known in the literature. This approach is based on subgroups of the affine general linear group and encompasses the good polynomial constructions in \cite{Tamo} as special cases. We also present a lower bound on the minimum distance of our qLRCs constructed using good polynomials based on this affine general linear group approach. This lower bound can be viewed as a generalization of the bound given in \cite{Golowich}.

The rest of the paper is organized as follows. In Section~\ref{sec:prelim}, we introduce relevant notions and provide definitions of LRC and qLRC. Our qLRC construction is presented in Section~\ref{sec:constrcution}. In Section~\ref{sec:AGL}, we present a method to obtain good polynomials using subgroups of the affine general linear group. In Section~\ref{sec:distance_bound}, we derive lower bounds on the minimum distance of our qLRCs. We conclude the paper in Section~\ref{sec:conclusion}.

\section{Preliminaries}\label{sec:prelim}
%et $\mathbb{F}_q$ be a finite field of order $q$. Let $n$ be a positive integer, and let $\mathbb{F}_{q}^n$ denote the set of all $n$-tuples over $\mathbb{F}_{q}.$ 
A linear code \( \mathcal{C} \) of block length \( n \) and dimension \( k \) over \( \mathbb{F}_{q} \), the finite field of order \( q \), is a \( k \)-dimensional subspace of \( \mathbb{F}_{q}^n \). The minimum distance is defined as \( d = \min\{\text{wt}(\boldsymbol{c}) : \boldsymbol{c} \in \mathcal{C}, \boldsymbol{c} \neq \mathbf{0}\} \), where \( \text{wt}(\boldsymbol{c}) \) denotes the Hamming weight of \( \boldsymbol{c} \).
The support of a vector \( \boldsymbol{v} \in \mathbb{F}_{q}^n \), denoted by \( \text{supp}(\boldsymbol{v}) \), is the set of positions where \( \boldsymbol{v} \) has nonzero coordinates.

The dual code \( \mathcal{C}^\perp \) of a linear code \( \mathcal{C} \) is defined as
$$
\mathcal{C}^\perp = \left\{ (x_1, x_2, \ldots, x_n) \in \mathbb{F}_q^n : \sum_{i=1}^{n} c_i x_i = 0 \text{ for all } (c_1, c_2, \ldots, c_n) \in \mathcal{C} \right\}.
$$
A linear code \( \mathcal{C} \) is said to be dual-containing if it satisfies \( \mathcal{C}^\perp \subseteq \mathcal{C} \), which holds only if \( k \geq \frac{n}{2} \). We denote \( [n] \) as the set \( \{1, 2, \ldots, n\} \). Bold symbols like \( \boldsymbol{u} \) are used to denote vectors.

\subsection{Classical Locally Recoverable Codes}
In this subsection, we provide a brief overview of LRCs.  We restrict our attention to linear LRCs. 
\begin{defi}[Locally Recoverable Code]
Let $\mathcal{C}$ be an $[n,k]_q$ linear 
code over $\mathbb{F}_q$, and let $r$ be a positive integer. We say that $\mathcal{C}$ has locality $r$ if each symbol $c_i$ of a codeword $\boldsymbol{c} = (c_1, c_2, \ldots, c_n) \in \mathcal{C}$ can be recovered from at most $r$ other symbols (different from $c_i$) in $\boldsymbol{c}.$ That is, for each $i \in [n]$, there exists a subset $I_i \subseteq [n] \setminus \{i\}$ with $|I_i| \leq r$ such that $c_i$ can be recovered from the coordinates indexed by $I_i.$ The set $I_i$ is called a recovery set for $c_i$.
%and the code $\mathcal{C}$ is called an $(n,k,r)$-LRC.
\end{defi}
%We now recall an alternate definition of an LRC.
%\begin{defi}
%Let $\mathcal{C}$ be an $[n,k,d]_q$-linear code over $\mathbb{F}_q.$ We say that $\mathcal{C}$ has locality $r$ if for each position $i \in [1,n]$, there exists a codeword $x \in \mathcal{C}^\perp$ such that $\text{wt}(x) \leq r+1$ and $i \in \text{supp}(x).$    
%\end{defi}
% \begin{defi}(LRC codes) Let $\mathcal{C}$ be an $[n,k,d]_q$-linear code over $\mathbb{F}_q$, and let $r$ be a positive integer. We say that $\mathcal{C}$ has locality $r$ if each symbol $c_i$ of the codeword $c=(c_1,c_2,\ldots,c_n)\in\mathcal{C}$ can be recovered from $r$ other symbols (different from $c_i$ ) of $c.$ That is, for $c\in\mathcal{C}$ and $i\in[1,n]$, there exists a subset $I_i\subset[1,n]\setminus\{i\}$ with $|I_i|\leq r$ such that $c_i$ can be recovered using the coordinates in $I_i.$ The set $I_i$ is called a recovery set of the symbol $c_i$ and the code $\mathcal{C}$ is called an $(n,k,r)$-LRC code.
% \end{defi}
Gopalan \etal \cite{Gopalan} established that every $[n,k,d]_q$ linear code with locality $r$ satisfies a Singleton-like bound:
\begin{equation*}\label{Singleton}
    d \leq n - k - \left\lceil \frac{k}{r} \right\rceil + 2.
\end{equation*}
LRCs that achieve this bound with equality are referred to as optimal LRCs and a family of optimal LRCs is given in \cite{Tamo}.
We refer readers to the survey in \cite{survey} for a detailed discussion on LRCs and their variants.

\subsection{Quantum Locally Recoverable codes}
We begin this subsection by defining quantum codes. 
\begin{defi}[Quantum Code]
A quantum code of block length $n$, dimension $\kappa$, and alphabet size $q$ is a $q^{\kappa}$-dimensional subspace $\mathcal{C}$ of the Hilbert space $(\mathbb{C}^q)^{\otimes n}$. 
\end{defi}
A quantum code is typically characterized by its parameters $[[n,\kappa, \delta]]_q$, where $\delta$ is the minimum distance of the code. Such a quantum code encodes $\kappa$ logical qudits into entangled states of $n$ qudits
and protects against the erasure of any set of $\delta-1$ qudits. 
In quantum mechanics, the Dirac notion is commonly used, under which codewords of quantum code $\mathcal{C}$ are denoted as $\vert \phi \rangle \in \mathcal{C}$.
%A quantum code of length $n$ over the code alphabet $\mathbb{F}_q$ is a subspace $\mathcal{C}$ of the Hilbert space $(\mathbb{C}^q)^{\otimes n}$, where $n$ represents the number of quantum bits (qubits). A quantum code is typically characterized by its parameters $[[n,\kappa, \delta]]$, where $n$ is the length, $\kappa$ is the dimension (number of encoded qubits), and $\delta$ is the distance of the code. The code $\mathcal{C}$ protects quantum information from errors by encoding logical qudits into entangled states of $n$ quantum qubits. 

Now, we shall recall a special type of quantum codes: CSS codes, named after Calderbank, Shor, and Steane. CSS construction \cite{calderbank1996good, steane1996error, Ketkar} is a method of obtaining quantum codes from two classical linear codes that satisfy a specific relationship with one another. Quantum codes constructed through the CSS approach are called CSS codes. 

%A CSS code is a type of quantum error-correcting code that protects quantum information from errors by using two classical linear codes with a specific relationship to one another. 
\begin{prop}[CSS Construction \cite{Ketkar}] Let $\mathcal{C}_i$ be an $[n,k_i]_q$ linear code for $i=1,2.$ If $\mathcal{C}_1^{\perp}\subseteq \mathcal{C}_2$, then there exists an $[[n,\kappa,\delta]]_q$-quantum code $\mathcal{C}=\text{CSS}(\mathcal{C}_1,\mathcal{C}_2)$ with $\kappa=k_1+k_2-n$ and $\delta=\min\{\text{wt}(\mathcal{C}_1\setminus\mathcal{C}_2^{\perp}),\text{wt}(\mathcal{C}_2\setminus\mathcal{C}_1^{\perp})\}.$ 
\end{prop}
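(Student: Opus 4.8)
The plan is to construct $\text{CSS}(\mathcal{C}_1,\mathcal{C}_2)$ explicitly and then read its three parameters off the combinatorics of cosets. Fix a nontrivial additive character $\chi\colon\mathbb{F}_q\to\mathbb{C}^{\times}$ (one may take $\chi(x)=e^{2\pi i x/q}$ when $q$ is prime and $\chi(x)=e^{2\pi i\,\Tr_{\mathbb{F}_q/\mathbb{F}_p}(x)/p}$ in general, with $p=\mathrm{char}\,\mathbb{F}_q$), and index the computational basis of $(\mathbb{C}^q)^{\otimes n}$ by $|\boldsymbol{x}\rangle$, $\boldsymbol{x}\in\mathbb{F}_q^n$. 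Since $\mathcal{C}_1^{\perp}\subseteq\mathcal{C}_2$, the quotient $\mathcal{C}_2/\mathcal{C}_1^{\perp}$ has exactly $q^{k_2-(n-k_1)}=q^{\kappa}$ elements, and for each coset $\boldsymbol{v}+\mathcal{C}_1^{\perp}$ with $\boldsymbol{v}\in\mathcal{C}_2$ the vector
\[
|\psi_{\boldsymbol{v}}\rangle=\frac{1}{\sqrt{|\mathcal{C}_1^{\perp}|}}\sum_{\boldsymbol{w}\in\mathcal{C}_1^{\perp}}|\boldsymbol{v}+\boldsymbol{w}\rangle
\]
depends only on the coset. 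Distinct cosets are disjoint subsets of $\mathbb{F}_q^n$, so the $|\psi_{\boldsymbol{v}}\rangle$ are orthonormal, and I would define $\mathcal{C}=\text{CSS}(\mathcal{C}_1,\mathcal{C}_2)$ to be their $\mathbb{C}$-span: a subspace of $(\mathbb{C}^q)^{\otimes n}$ of dimension $q^{\kappa}$, which settles the block length and the dimension.

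For the distance I would invoke the $q$-ary Pauli group, generated by the shift and phase operators $X_{\boldsymbol{a}}|\boldsymbol{x}\rangle=|\boldsymbol{x}+\boldsymbol{a}\rangle$ and $Z_{\boldsymbol{b}}|\boldsymbol{x}\rangle=\chi(\boldsymbol{b}\cdot\boldsymbol{x})|\boldsymbol{x}\rangle$ (the weight of $X_{\boldsymbol{a}}Z_{\boldsymbol{b}}$ being $|\text{supp}(\boldsymbol{a})\cup\text{supp}(\boldsymbol{b})|$), together with the standard fact that the distance of a stabilizer code equals the least weight of a Pauli operator that commutes with every stabilizer yet is not a scalar multiple of one --- the error-detection form of the Knill--Laflamme conditions, which one can either quote or verify directly on the basis $\{|\psi_{\boldsymbol{v}}\rangle\}$. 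A short computation on that basis supplies the dictionary: $X_{\boldsymbol{a}}$ sends $|\psi_{\boldsymbol{v}}\rangle$ to $|\psi_{\boldsymbol{v}+\boldsymbol{a}}\rangle$ when $\boldsymbol{a}\in\mathcal{C}_2$, restricting to the identity on $\mathcal{C}$ exactly when $\boldsymbol{a}\in\mathcal{C}_1^{\perp}$, and maps $\mathcal{C}$ into its orthogonal complement when $\boldsymbol{a}\notin\mathcal{C}_2$; dually, $Z_{\boldsymbol{b}}$ multiplies $|\psi_{\boldsymbol{v}}\rangle$ by the coset-dependent scalar $\chi(\boldsymbol{b}\cdot\boldsymbol{v})$ when $\boldsymbol{b}\in\mathcal{C}_1$, restricting to the identity on $\mathcal{C}$ exactly when $\boldsymbol{b}\in\mathcal{C}_2^{\perp}$ (note $\mathcal{C}_1^{\perp}\subseteq\mathcal{C}_2$ forces $\mathcal{C}_2^{\perp}\subseteq\mathcal{C}_1$), and maps $\mathcal{C}$ into its orthogonal complement when $\boldsymbol{b}\notin\mathcal{C}_1$, since the character sum $\sum_{\boldsymbol{w}\in\mathcal{C}_1^{\perp}}\chi(\boldsymbol{b}\cdot\boldsymbol{w})$ vanishes precisely when $\boldsymbol{b}\notin(\mathcal{C}_1^{\perp})^{\perp}=\mathcal{C}_1$.

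Putting the halves together, $X_{\boldsymbol{a}}Z_{\boldsymbol{b}}$ preserves $\mathcal{C}$ iff $\boldsymbol{a}\in\mathcal{C}_2$ and $\boldsymbol{b}\in\mathcal{C}_1$, and it then acts nontrivially on $\mathcal{C}$ iff $\boldsymbol{a}\notin\mathcal{C}_1^{\perp}$ or $\boldsymbol{b}\notin\mathcal{C}_2^{\perp}$; hence $\delta$ is the minimum of $|\text{supp}(\boldsymbol{a})\cup\text{supp}(\boldsymbol{b})|$ over all such pairs. Taking $\boldsymbol{b}=\boldsymbol{0}$ with $\boldsymbol{a}$ of least weight in $\mathcal{C}_2\setminus\mathcal{C}_1^{\perp}$, and symmetrically $\boldsymbol{a}=\boldsymbol{0}$ with $\boldsymbol{b}\in\mathcal{C}_1\setminus\mathcal{C}_2^{\perp}$, shows $\delta\le\min\{\text{wt}(\mathcal{C}_1\setminus\mathcal{C}_2^{\perp}),\text{wt}(\mathcal{C}_2\setminus\mathcal{C}_1^{\perp})\}$; conversely, any nontrivial code-preserving $X_{\boldsymbol{a}}Z_{\boldsymbol{b}}$ has $\boldsymbol{a}\in\mathcal{C}_2\setminus\mathcal{C}_1^{\perp}$ or $\boldsymbol{b}\in\mathcal{C}_1\setminus\mathcal{C}_2^{\perp}$, and the weight of that one vector already lower-bounds $|\text{supp}(\boldsymbol{a})\cup\text{supp}(\boldsymbol{b})|$, so the two estimates meet and deliver the stated value of $\delta$.

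The only real friction I anticipate is the bookkeeping concentrated in the middle paragraph: pinning down the $q$-ary Pauli conventions and making the character-sum and ``maps into the orthogonal complement'' claims fully rigorous over a non-prime field, where commutation relations and duality have to be routed through $\Tr_{\mathbb{F}_q/\mathbb{F}_p}$ and $\mathbb{F}_p$-bilinear forms rather than the $\mathbb{F}_q$-inner product; and, should a self-contained argument be wanted, verifying the Knill--Laflamme error-detection criterion for this family of states rather than importing it wholesale from stabilizer-code theory.
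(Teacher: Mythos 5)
The paper does not prove this proposition; it is quoted from the Ketkar et al.\ survey that is cited alongside it, so there is no internal proof to compare your attempt against. Judged on its own terms, your proposal is the standard stabilizer/CSS argument and is essentially correct: you span the code by coset states $|\psi_{\boldsymbol{v}}\rangle$ over $\mathcal{C}_2/\mathcal{C}_1^{\perp}$, read off $\kappa = k_1+k_2-n$ from the coset count, and obtain $\delta$ by classifying which $X_{\boldsymbol{a}}Z_{\boldsymbol{b}}$ preserve the code space, act as a scalar, or map it into the orthogonal complement. The points you flag at the end are indeed where the residual work lies. Over a non-prime field you must route the character through $\Tr_{\mathbb{F}_q/\mathbb{F}_p}$ and then justify that the annihilator $\{\boldsymbol{b} : \chi(\boldsymbol{b}\cdot\boldsymbol{w})=1 \text{ for all } \boldsymbol{w}\in C\}$ is exactly the $\mathbb{F}_q$-dual $C^{\perp}$; this holds because that annihilator is an $\mathbb{F}_q$-subspace (using $\boldsymbol{b}\cdot(\lambda\boldsymbol{w})=(\lambda\boldsymbol{b})\cdot\boldsymbol{w}$ and closure of $C$ under scaling) of cardinality $q^n/|C|$ containing $C^{\perp}$, hence equals it --- a one-line check you should include rather than leave implicit. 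The other thing to spell out is the passage from ``minimum weight of a nontrivial normalizer element'' to the Knill--Laflamme distance for arbitrary errors, which you correctly note can be quoted or verified directly on the coset basis; verifying it directly amounts to the computation of $\langle\psi_{\boldsymbol{u}}|X_{\boldsymbol{a}}Z_{\boldsymbol{b}}|\psi_{\boldsymbol{v}}\rangle$ you already sketch. With those two checks filled in, the argument is complete and matches the conventional proof of this result.
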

The quantum version of LRC introduced by \cite{Golowich} can be defined as follows.
%\footnote{We refer readers to \cite{Golowich} for a more formal definition of qLRCs.}.  
\begin{defi}[Quantum Locally Recoverable Code \cite{Golowich}] An $[[n,\kappa, \delta]]_q$ quantum code has locality $r$ if any single qudit of $\vert \phi \rangle \in  \mathcal{C}$ is erased then $\vert \phi \rangle$ can be recovered by applying a recovery channel that accesses only $r$ other qudits of $\vert \phi \rangle$. 
\end{defi}
%We refer readers to \cite{Golowich} for a more formal definition of qLRC. 
%\end{defi}
Proposition \ref{prop:lrc_css} below gives the CSS approach to construct qLRCs from classical codes. 
\begin{prop} 
[\cite{Golowich}] \label{prop:lrc_css}
    A quantum code $\mathcal{C}=\text{CSS}(\mathcal{C}_1,\mathcal{C}_2)$ has locality $r$ if for each position $i\in [n]$, there exist codewords $ \boldsymbol{c}_1\in \mathcal{C}_1^{\perp}$ and $ \boldsymbol{c}_2\in\mathcal{C}_2^{\perp}$ such that $i\in \text{supp}( \boldsymbol{c}_1)\cap\text{supp}( \boldsymbol{c}_2)$ and $|\text{supp}( \boldsymbol{c}_1)\cup\text{supp}( \boldsymbol{c}_2)|\leq r+1.$
\end{prop}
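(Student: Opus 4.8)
The plan is to use the stabilizer description of CSS codes together with the standard ``operator cleaning'' argument for erasure correction. Recall that $\mathcal{C}=\text{CSS}(\mathcal{C}_1,\mathcal{C}_2)$ is the stabilizer code whose stabilizer group is generated by the $Z$-type operators $Z(\boldsymbol{v})$ for $\boldsymbol{v}\in\mathcal{C}_1^{\perp}$ and the $X$-type operators $X(\boldsymbol{w})$ for $\boldsymbol{w}\in\mathcal{C}_2^{\perp}$, where for $\boldsymbol{a}=(a_1,\dots,a_n)$ we write $X(\boldsymbol{a})=\bigotimes_{j} X^{a_j}$ and $Z(\boldsymbol{a})=\bigotimes_{j} Z^{a_j}$ in terms of the generalized Pauli operators on $\mathbb{C}^q$. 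These generators commute precisely because $\mathcal{C}_1^{\perp}\subseteq\mathcal{C}_2=(\mathcal{C}_2^{\perp})^{\perp}$, so every $\boldsymbol{v}\in\mathcal{C}_1^{\perp}$ is orthogonal to every $\boldsymbol{w}\in\mathcal{C}_2^{\perp}$, which is exactly the hypothesis of the CSS construction.

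First I would fix a position $i\in[n]$ together with the codewords $\boldsymbol{c}_1\in\mathcal{C}_1^{\perp}$ and $\boldsymbol{c}_2\in\mathcal{C}_2^{\perp}$ supplied by the hypothesis, and set $R=\bigl(\text{supp}(\boldsymbol{c}_1)\cup\text{supp}(\boldsymbol{c}_2)\bigr)\setminus\{i\}$, so that $|R|\le r$ and $i\notin R$. The heart of the argument is that, on the code space $\mathcal{C}$, the single-qudit operators on qudit $i$ can be rewritten as operators acting only on the qudits indexed by $R$. Indeed, since $Z(\boldsymbol{c}_1)$ acts as the identity on $\mathcal{C}$ and $(\boldsymbol{c}_1)_i\neq 0$ is invertible in $\mathbb{F}_q$, multiplying the relation $Z(\boldsymbol{c}_1)=\mathrm{Id}$ on $\mathcal{C}$ by suitable powers of the $Z$-operators on the coordinates in $\text{supp}(\boldsymbol{c}_1)\setminus\{i\}\subseteq R$ shows that $Z_i$, restricted to $\mathcal{C}$, equals a tensor product of $Z$-operators supported on $R$; symmetrically, using that $X(\boldsymbol{c}_2)$ stabilizes $\mathcal{C}$ and $(\boldsymbol{c}_2)_i\neq 0$, the operator $X_i$ restricted to $\mathcal{C}$ equals a product of $X$-operators supported on $R$. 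Since $X_i$ and $Z_i$ generate the full operator algebra $\mathrm{End}(\mathbb{C}^q)$ on qudit $i$, this proves that every operator on qudit $i$, when restricted to $\mathcal{C}$, lies in the algebra of operators on the qudits of $R$.

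With this in hand I would invoke the standard erasure-correction criterion (the cleaning lemma, following Bravyi--Terhal): model the erasure of qudit $i$ by purifying a code state $|\phi\rangle$ through an isometry that entangles qudit $i$ with a reference system; the relation above forces qudit $i$ to decouple from the reference once one has access to the qudits in $R$, and hence there is a CPTP recovery channel acting only on the qudits of $R$ (together with a fresh ancilla inserted at position $i$) that restores $|\phi\rangle$. Since $|R|\le r$, this recovery channel accesses at most $r$ qudits other than $i$, which is exactly the locality condition in the definition of a qLRC; as $i\in[n]$ was arbitrary, $\mathcal{C}$ has locality $r$. I expect the main obstacle to be making this last step fully rigorous in the qudit (rather than qubit) setting with a genuine quantum recovery channel: one must track the $\mathbb{F}_q$-coefficients $(\boldsymbol{c}_1)_i,(\boldsymbol{c}_2)_i$ and their inverses in the generalized Pauli formalism, and phrase the decoupling argument so that it produces an honest CPTP map supported on $R$ rather than merely a syndrome-style classical reconstruction. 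The remainder is bookkeeping with the CSS stabilizer group.
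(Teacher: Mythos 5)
The paper does not actually prove this proposition; it is stated with a citation to Golowich--Guruswami, so there is no in-paper proof to compare against. Your stabilizer-cleaning argument is, to my knowledge, the same route taken in the cited reference, and the overall plan is sound: use $Z(\boldsymbol{c}_1)$ and $X(\boldsymbol{c}_2)$ as stabilizers, push $Z$- and $X$-type operators off qudit $i$ onto $R=(\mathrm{supp}(\boldsymbol{c}_1)\cup\mathrm{supp}(\boldsymbol{c}_2))\setminus\{i\}$, and invoke the decoupling/cleaning criterion to get a CPTP recovery map supported on $R\cup\{i\}$.

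Two points deserve more care than your sketch gives them. First, your notation $X(\boldsymbol{a})=\bigotimes_j X^{a_j}$ implicitly treats $\mathbb{F}_q$ as $\mathbb{Z}_q$; for $q=p^m$ with $m>1$ the generalized Pauli operators are indexed by $\mathbb{F}_q$ additively, and a single $Z(\boldsymbol{c}_1)$ only lets you push the order-$p$ cyclic group $\{Z(j\,(\boldsymbol{c}_1)_i)_i : j\in\mathbb{Z}_p\}$ onto $R$, not the full group of $Z$-type operators on qudit $i$. You recover the full group by observing that $\mathcal{C}_1^{\perp}$ is $\mathbb{F}_q$-linear, so the whole line $\{a\boldsymbol{c}_1 : a\in\mathbb{F}_q\}\subseteq\mathcal{C}_1^{\perp}$ is available with the same support, and $\{a(\boldsymbol{c}_1)_i : a\in\mathbb{F}_q\}=\mathbb{F}_q$; ``taking inverses of $(\boldsymbol{c}_1)_i$'' alone does not do this. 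Second, the statement that pushing a generating set of single-qudit operators onto $R$ yields a CPTP recovery channel on $R\cup\{i\}$ is a genuine theorem (a form of the cleaning lemma for erasure), and you are right to flag it as the step that needs to be cited or proved; as written you gesture at the decoupling picture but do not actually establish that the conditional mutual information $I(\mathrm{Ref}:\{i\}\mid R)$ vanishes, which is what licenses the recovery map. Neither of these is a wrong turn, but both are load-bearing details that your sketch currently leaves implicit.
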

The following corollary of proposition~\ref{prop:lrc_css} reduces the problem of constructing qLRCs to that of constructing dual-containing classical LRCs.  
\begin{cor} \label{cor1}
 Let $\mathcal{C}$ be an $[n,k]_q$ linear code such that $\mathcal{C}^{\perp}\subset\mathcal{C}.$ If $\mathcal{C}$ has locality $r$, then  $\text{CSS}(\mathcal{C},\mathcal{C})$ is an $[[n,2k-n,\delta]]_q$ quantum code with locality $r$, where $\delta=\text{wt}(\mathcal{C}\setminus\mathcal{C}^\perp).$
\end{cor}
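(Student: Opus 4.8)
The plan is to instantiate Proposition~\ref{prop:lrc_css} with the single code $\mathcal{C}$ playing the role of both $\mathcal{C}_1$ and $\mathcal{C}_2$, and then read off the parameters. Since $\mathcal{C}^{\perp}\subset\mathcal{C}$, the hypothesis $\mathcal{C}_1^{\perp}\subseteq\mathcal{C}_2$ of the CSS construction (Proposition on CSS) is satisfied with $\mathcal{C}_1=\mathcal{C}_2=\mathcal{C}$, so $\text{CSS}(\mathcal{C},\mathcal{C})$ is a well-defined quantum code. Its dimension is $\kappa=k_1+k_2-n=k+k-n=2k-n$, which gives the claimed block length and dimension $[[n,2k-n,\,\cdot\,]]_q$. (Note that dual-containment forces $k\geq n/2$, so $2k-n\geq 0$, consistent with a genuine quantum code.)

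Next I would compute the minimum distance. The CSS proposition gives $\delta=\min\{\text{wt}(\mathcal{C}_1\setminus\mathcal{C}_2^{\perp}),\,\text{wt}(\mathcal{C}_2\setminus\mathcal{C}_1^{\perp})\}$, and with $\mathcal{C}_1=\mathcal{C}_2=\mathcal{C}$ both terms collapse to $\text{wt}(\mathcal{C}\setminus\mathcal{C}^{\perp})$, so $\delta=\text{wt}(\mathcal{C}\setminus\mathcal{C}^{\perp})$ exactly as stated. Here $\mathcal{C}\setminus\mathcal{C}^{\perp}$ is nonempty precisely because the containment $\mathcal{C}^{\perp}\subset\mathcal{C}$ is strict, so the minimum is taken over a nonempty set and $\delta$ is well-defined.

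It remains to verify the locality claim, which is where the slight subtlety lies: Proposition~\ref{prop:lrc_css} asks, for each coordinate $i\in[n]$, for codewords $\boldsymbol{c}_1\in\mathcal{C}_1^{\perp}$ and $\boldsymbol{c}_2\in\mathcal{C}_2^{\perp}$ with $i\in\text{supp}(\boldsymbol{c}_1)\cap\text{supp}(\boldsymbol{c}_2)$ and $|\text{supp}(\boldsymbol{c}_1)\cup\text{supp}(\boldsymbol{c}_2)|\leq r+1$. Since $\mathcal{C}$ has classical locality $r$, for each $i$ there is a recovery set $I_i\subseteq[n]\setminus\{i\}$ with $|I_i|\leq r$ from which $c_i$ is determined; for a linear code this is equivalent to the existence of a codeword $\boldsymbol{h}_i\in\mathcal{C}^{\perp}$ (a parity check) with $i\in\text{supp}(\boldsymbol{h}_i)$ and $\text{supp}(\boldsymbol{h}_i)\setminus\{i\}\subseteq I_i$, hence $|\text{supp}(\boldsymbol{h}_i)|\leq r+1$. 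Taking $\boldsymbol{c}_1=\boldsymbol{c}_2=\boldsymbol{h}_i\in\mathcal{C}^{\perp}=\mathcal{C}_1^{\perp}=\mathcal{C}_2^{\perp}$ gives $i\in\text{supp}(\boldsymbol{c}_1)\cap\text{supp}(\boldsymbol{c}_2)$ and $|\text{supp}(\boldsymbol{c}_1)\cup\text{supp}(\boldsymbol{c}_2)|=|\text{supp}(\boldsymbol{h}_i)|\leq r+1$, so the hypothesis of Proposition~\ref{prop:lrc_css} holds and $\text{CSS}(\mathcal{C},\mathcal{C})$ has locality $r$.

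The only step requiring genuine care is the passage from the recovery-set formulation of classical locality to the existence of a low-weight dual codeword supported on $\{i\}\cup I_i$; this is the standard fact that in a linear code a recovery set for coordinate $i$ of size at most $r$ is witnessed by a parity-check equation with at most $r+1$ nonzero entries, one of them in position $i$. Everything else is a direct substitution into the two earlier propositions, so I would keep the write-up short: state the instantiation $\mathcal{C}_1=\mathcal{C}_2=\mathcal{C}$, invoke the CSS proposition for the parameters, invoke Proposition~\ref{prop:lrc_css} for locality using the parity-check observation, and conclude.
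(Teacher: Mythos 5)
Your proof is correct and follows the same route the paper implicitly intends (the paper states this as a direct consequence of the CSS construction and Proposition~\ref{prop:lrc_css} without writing out a separate proof): instantiate both codes as $\mathcal{C}$, read off $\kappa=2k-n$ and $\delta=\text{wt}(\mathcal{C}\setminus\mathcal{C}^\perp)$, and witness the locality condition by the standard fact that a size-$r$ recovery set for coordinate $i$ in a linear code corresponds to a parity check $\boldsymbol{h}_i\in\mathcal{C}^\perp$ of weight at most $r+1$ with $i\in\text{supp}(\boldsymbol{h}_i)$, then take $\boldsymbol{c}_1=\boldsymbol{c}_2=\boldsymbol{h}_i$. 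You correctly identified the one non-automatic step (the recovery-set-to-parity-check translation) and handled it properly.
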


\subsection{Prior Work on qLRCs}
An early work investigating dual-containing LRCs is \cite{MukhopadhyayHB22}, which presents some examples and basic properties of such codes. The authors mention the possibility of constructing quantum codes from these LRCs, but this direction was not further explored.

Motivated by potential applications to quantum storage, qLRCs were formally defined in \cite{Golowich}. The authors of \cite{Golowich} showed that the parameters \( n, \kappa, \delta, r \) of qLRCs must satisfy the following quantum Singleton-like bound:
\begin{equation} \label{eq:quantum_singleton}
  \kappa \le n - 2(\delta - 1) - \Big\lfloor \frac{n - (\delta - 1)}{r + 1} \Big\rfloor - \Bigg\lfloor \frac{n - 2(\delta - 1) - \Big\lfloor \frac{n - (\delta - 1)}{r + 1} \Big\rfloor}{r + 1} \Bigg\rfloor.   
\end{equation}
In \cite{Golowich}, a qLRC construction is provided for parameters satisfying \( n = q - 1 \) and \( (r + 1) \mid n \). This construction follows the CSS approach given by Corollary~\ref{cor1}, and the required dual-containing LRC is obtained by modifying a special case of the LRC in \cite{Tamo}. A folded version of this qLRC is also presented in \cite{Golowich}, which is shown to approach the quantum Singleton-like bound \eqref{eq:quantum_singleton} for large values of \( r \). Additionally, a random construction of qLRCs is presented in \cite{Golowich}, achieving a slightly better tradeoff than the folded construction.

In \cite{luo2023bounds}, using a parity-check matrix viewpoint \cite{Jin19, ChenH19e, XingY22, Zhang20a}, two classical codes satisfying the requirements of Proposition~\ref{prop:lrc_css} are provided for certain parameters. It is also shown that the cyclic LRCs in \cite{TamoBGC15, LuoXY19} are dual-containing for some parameters. The constructions in \cite{luo2023bounds} result in qLRCs with small minimum distance; more specifically, \( 2\delta < r - \frac{n}{r + 1} + 4 \). However, these codes meet the quantum Singleton-like bound~\eqref{eq:quantum_singleton} with equality whenever they exist.

\section{Constructions} \label{sec:constrcution}
In this section, we shall construct qLRCs of length $n$ and locality $r$ over a finite field of size $q\geq n$, where we assume throughout that $(r+1)$ divides $ n$. 

In \cite{Tamo}, the third author and Barg constructed classical LRCs whose codewords are the evaluations of specially constructed polynomials, and these codes can be viewed as subcodes of Reed-Solomon codes. However, these codes are generally not dual-containing, which means that qLRCs cannot be directly constructed from them using the CSS construction. Golowich and Guruswami \cite{Golowich} modified a special case of the construction from \cite{Tamo} and constructed dual-containing LRCs as a subcode of the Reed-Solomon code.

We follow their approach, and in particular, we construct dual-containing LRCs that can be viewed as a subcode of a generalized Reed-Solomon (GRS) code, applying the CSS construction to obtain qLRCs. Hence, the codes we construct are similar to the classical LRCs, but they are defined with some column multipliers (as in GRS) to guarantee dual-containment. 

Lastly, our codes can be viewed as the quantum analog of the classical LRC construction in \cite{Tamo}, 
as any partition of the set of evaluation points together with a good polynomial that is used to construct an LRC in \cite{Tamo} can also be used to construct a qLRC using our construction.
% {\color{red} VR: not true for non-binary charac fields?}

%For this, we define the following evaluation (encoding) map. 
% \begin{defi} For a positive integer $k$ satisfying $k\leq n$, let $\mathbb{F}_q[x]_{<k}$ denote the vector space of all polynomials of degree at most $k$ in the indeterminate $x$ with coefficients from $\mathbb{F}_q.$ 
% Let $n\leq q$ and $\boldsymbol{a}=(a_1,a_2,\ldots,a_n)$, where $a_1,a_2,\ldots,a_n$ are distinct elements of $\mathbb{F}_q.$ Let $\boldsymbol{u}=(u_1,u_2,\ldots,u_n)\in(\mathbb{F}_q^\ast)^n.$ Define the evaluation map $\text{ev}_{\boldsymbol{a},\boldsymbol{u}}:\mathbb{F}_q[x]_{<k}\rightarrow \mathbb{F}_q^n$ by $$\text{ev}_{\boldsymbol{a},\boldsymbol{u}}(f(x))=(u_1f(a_1),u_2f(a_2),\ldots,u_nf(a_n))$$ for all $f(x)\in\mathbb{F}_q[x].$ Then the code $\mathscr{C}=\{\text{ev}_{\boldsymbol{a},\boldsymbol{u}}(f(x)): f(x)\in\mathbb{F}_q[x]_{<k}\}$ is a linear code of length $n$ and dimension $k$ over $\mathbb{F}_q$ and is called the generalized Reed-Solomon (GRS) code with column multipliers $u_1,u_2,\ldots,u_n$ and evaluation points $a_1,a_2,\ldots,a_n.$ \end{defi}
Let $\mathbb{F}_q[x]$ denote the ring of polynomials in $x$ with coefficients from $\mathbb{F}_q.$ For a non-empty subset $S \subset \mathbb{F}_q[x]$, the span of $S$ over $\mathbb{F}_q$, denoted by $\Span(S)$, is defined as
$$\Span(S) = \left\{ \sum\limits_{i=1}^{k} w_i f_i(x) \mid f_i(x) \in S, \, w_i \in \mathbb{F}_q, \, k \in \mathbb{N} \right\}.$$ 
Let $n\leq q$ and $A=\{\alpha_1,\alpha_2,\ldots,\alpha_n\}\subseteq \mathbb{F}_q$ be a set of size $n$. 
Let $\boldsymbol{u}=(u_1,u_2,\ldots,u_n)\in(\mathbb{F}_q^\ast)^n.$ Define the evaluation map $\ev_{A,\boldsymbol{u}}:\Span(S)\rightarrow \mathbb{F}_q^n$ by $$\ev_{A,\boldsymbol{u}}(f(x))=(u_1f(\alpha_1),u_2f(\alpha_2),\ldots,u_nf(\alpha_n))$$ for all $f(x)\in\Span(S)$.
%Let $|\ev_{A,\boldsymbol{u}}(f(x))|$ denotes the Hamming weight of the vector $\ev_{A,\boldsymbol{u}}(f(x))\in\mathbb{F}_q^n$. 

A key ingredient for the LRC construction in \cite{Tamo} is good polynomials, and they play a similarly crucial role in the dual-containing LRCs that we construct.
\begin{defi}[\cite{Tamo}] Let $A\subseteq\mathbb{F}_q$ be a set of size $n$ and let $r$ be a positive integer such that $(r+1)\mid n.$ Let $\{A_1,A_2,\ldots,A_\frac{n}{r+1}\}$ be a partition of $A$ such that $|A_i|=r+1$ for $1\leq i\leq \frac{n}{r+1}.$ A polynomial $g(x)\in\mathbb{F}_q[x]$ of degree $r+1$ is called a \emph{good polynomial} for the $A_i$'s if it is constant on each one of them, i.e., for any $1\leq i\leq \frac{n}{r+1}$ and any $\alpha,\beta\in A_i$, we have $g(\alpha)=g(\beta)$. 
\end{defi}
%\begin{defi} 
%Let $A = \{ a_1, a_2, \ldots, a_n \} \subseteq \mathbb{F}_q$ be a set of size $n.$ 
%Let $V_A$ be the $(n-1)\times n$ matrix defined as
%submatrix of $V$ obtained by deleting the last column of $V.$ That is, 
%\begin{equation}\label{Vandermonde}
%V_A =
%\begin{pmatrix}
%1 & a_1 & a_1^2 & \cdots & a_1^{n-2} \\
%1 & a_2 & a_2^2 & \cdots & a_2^{n-2} \\
%1 & 1&  \cdots & 1 \\
%a_1 & a_2 & \cdots & a_n \\
%\vdots & \vdots & \ddots & \vdots \\
%a_1^{n-2} & a_2^{n-2} & \cdots & a_n^{n-2}
%1 & a_n & a_n^2 & \cdots & a_n^{n-2}
%\end{pmatrix}.     
%\end{equation}
%\end{defi}
%It is easy to see that the matrix $V_A$ has full row rank $n-1.$ %Let $V_A^\intercal$ denotes the transpose of the Vandermonde submatrix $V_A.$
Next, we describe the qLRC construction. 
\begin{construction} \label{constr}
%Let $n$ and $r$ be positive integers such that $(r+1)\mid n.$ 
Let $A= \{\alpha_1,\alpha_2,\ldots,\alpha_n\}=\sqcup_{i=1}^{\frac{n}{r+1}} A_i \subseteq \mathbb{F}_q$ be a set of size $n\leq q$, where $|A_i|=r+1$ for each $i$ and  $g(x)$ is  a good polynomial for the $A_i$'s. Suppose there exists 
 a nonzero vector $\boldsymbol{u}=(u_1,u_2,\ldots,u_n) \in \mathbb{F}_q^n$ such that %a vector $\boldsymbol{v}=(v_1,v_2,\ldots,v_n)\in(\mathbb{F}_q^\ast)^n$ such that 
\begin{equation}\label{tiki}
\begin{pmatrix}
%1 & a_1 & a_1^2 & \cdots & a_1^{n-2} \\
%1 & a_2 & a_2^2 & \cdots & a_2^{n-2} \\
1 &  1& \cdots & 1 \\
\alpha_1 &\alpha_2 & \cdots & \alpha_n \\
\vdots & \vdots & \ddots & \vdots \\
\alpha_1^{n-2} &\alpha_2^{n-2} & \cdots & \alpha_n^{n-2}
%1 & a_n & a_n^2 & \cdots & a_n^{n-2}
\end{pmatrix}\begin{pmatrix}
    u_1^2\\
    u_2^2\\
    \vdots\\
    u_n^2
\end{pmatrix}=\bold{0}\footnote{It is easy to verify that such a vector \( \boldsymbol{u} \) always exists over the field \( \mathbb{F}_{q^2} \), and thus the code will be defined over \( \mathbb{F}_{q^2} \). Note that we show later that in many cases it also exists over \( \mathbb{F}_q \), and then the code will be defined over \( \mathbb{F}_{q} \). Furthermore, any such vector \( \boldsymbol{u} \) necessarily satisfies that each \( u_i \) is nonzero.}.
\end{equation}
%{\color{red} $\alpha_i$ not defined:VR} 
 %$V_A\cdot\boldsymbol{v}=\bold{0}$ and $v_i=u_i^2$ for each $i$ satisfying $1\leq i\leq n.$ 
Define the set $S = S_1 \sqcup S_2$ of $k$ polynomials, where  $\frac{n}{2}<k\leq \frac{nr}{r+1}$, as follows. 
$S_1$ is the set of $k-\frac{n}{r+1}$ polynomials of the form $x^i g(x)^j$ with the smallest possible degree, where $ 1\leq i\leq  r-1$, and 
$S_2$ is the following set of $\frac{n}{r+1}$ polynomials
\begin{equation*} %\label{S2}
    S_2 = \left\{ g(x)^i : i = 0, 1, \ldots, \frac{n}{r+1}-1 \right\}.
\end{equation*}
Then, define the code  \begin{equation*}\label{LRC}
    \mathcal{C}=\left\{\ev_{A,\boldsymbol{u}}(f(x)): f(x)\in \Span(S)\right\}.
\end{equation*} 
%is a linear code of length $n$ over $\mathbb{F}_q.$  
%The evaluation map $\text{ev}_{A,\boldsymbol{u}}$ is a linear map. 
% Let $n$ and $r$ be positive integers such that $n=u(r+1).$ For an integer $k$ satisfying $\frac{n}{2}<k<n$, let us write $k-u=v(r-1)+w$, where $0\leq w\leq r-2.$ Define the integer $\ell$ as follows: $$\ell=\left\{\begin{array}{ll}
%     (v-1)(r+1)+r-1& \text{ if }w=0; \\
%    v(r+1)+w  & \text{ otherwise.}
% \end{array}\right.$$
% Let $A=\{a_1,a_2,\ldots,a_n\}\subseteq \mathbb{F}_q$ be a set with $n\leq q$  {\color{blue} and let $h(x)$ denote the annihilator polynomial of the set $A$. move this to the proof?} Let $g(x)$ be a good polynomial for the partition $\mathcal{A}$ of the set $A$.  
% Suppose there exists a vector $\boldsymbol{v}=(v_1,v_2,\ldots,v_n)\in(\mathbb{F}_q^\ast)^n$ such that $V_A^\intercal\cdot\boldsymbol{v}=0$ and $v_i=u_i^2$ for each $i$ satisfying $1\leq i\leq n.$ 
% %(Here, $V_A^\intercal$ is the Vandermonde type matrix defined in \eqref{Vandermonde}).
% Define the set $S=S_1\cup S_2$, where 
% $$S_1=\{x^{i\text{ (mod }r+1)}g(x)^{\lfloor\frac{i}{r+1}\rfloor}:i \in [1, \ell],~i \not\equiv 0,r \text{ (mod }r+1)\}$$ and $$S_2=\{g(x)^i:i=0,1,\ldots,u-1\}.$$
% Note that $S_1\cap S_2=\emptyset$ and $|S|=k.$
% Then the code \begin{equation}\label{LRC}
%     \mathcal{C}=\{\text{ev}_{\boldsymbol{a},\boldsymbol{u}}(f(x)): f(x)\in \Span(S),~\boldsymbol{a}=(a_1,\ldots,a_n)\}
% \end{equation} is a linear code of length $n$ and dimension $k$ over $\mathbb{F}_q.$
\end{construction}
In the sequel, for ease of notation, we often write $\ev(f(x))$ instead of $\ev_{A,\boldsymbol{u}}(f(x))$.

To use the classical code given by Construction~\ref{constr} to obtain a CSS code, we need to show that it is dual-containing. 
The following lemma, whose proof is given in the Appendix~\ref{appendix:good_poly}, is useful in proving this. 
\begin{lemma}\label{lem:good_poly}
Let $g(x)$ be a good polynomial for the partition
  $\mathcal{A}=\{A_1,A_2,\ldots,A_{\frac{n}{r+1}}\}$, where each $A_i$ is of size $r+1$, and let $h(x)$ be the annihilator polynomial of $\sqcup_i A_i$.
Define   $R\subset\mathbb{F}_q[x]$ to be the set of polynomials of degree less than $n$ that are constant on each  $A_i$. Then, $R$ is a vector space of dimension $\frac{n}{r+1}$ with a basis  $\{1,g(x),g(x)^2,\ldots,g(x)^{\frac{n}{r+1}-1}\}$  
and it also forms a ring under multiplication modulo $h(x).$
\end{lemma}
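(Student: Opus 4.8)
The plan is to establish three things in order: (i) $R$ is a vector space of dimension $\frac{n}{r+1}$; (ii) the powers $1, g, g^2, \ldots, g^{\frac{n}{r+1}-1}$ form a basis; (iii) $R$ is closed under multiplication modulo $h(x)$, hence forms a ring.

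For (i), I would first note that $R$ is clearly an $\mathbb{F}_q$-subspace of the space of polynomials of degree less than $n$, since being constant on each $A_i$ is a linear condition. To compute its dimension, I would use the evaluation map $\mathrm{ev}: \mathbb{F}_q[x]_{<n} \to \mathbb{F}_q^n$, $f \mapsto (f(\alpha_1), \ldots, f(\alpha_n))$, which is an isomorphism (Lagrange interpolation, since the $\alpha_j$ are distinct and there are $n$ of them). Under this isomorphism, $R$ corresponds exactly to the subspace of $\mathbb{F}_q^n$ of vectors that are constant on each block $A_i$ of the partition; that subspace has dimension equal to the number of blocks, namely $\frac{n}{r+1}$. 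Hence $\dim R = \frac{n}{r+1}$.

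For (ii), the polynomials $g(x)^j$ for $0 \le j \le \frac{n}{r+1}-1$ all lie in $R$: each is constant on each $A_i$ because $g$ is, and $\deg g^j = (r+1)j \le (r+1)(\frac{n}{r+1}-1) = n - (r+1) < n$, so each $g^j$ really is a degree-$<n$ polynomial. There are exactly $\frac{n}{r+1}$ of them, matching the dimension, so it suffices to show they are linearly independent. This follows because they have pairwise distinct degrees $0, r+1, 2(r+1), \ldots$ — a set of polynomials with distinct degrees is automatically linearly independent. So $\{1, g, \ldots, g^{\frac{n}{r+1}-1}\}$ is a basis of $R$.

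For (iii), given $f_1, f_2 \in R$, the product $f_1 f_2$ is a polynomial (of possibly large degree) that is still constant on each $A_i$. Reducing modulo $h(x) = \prod_{j=1}^n (x - \alpha_j)$ does not change the values at the points $\alpha_j$, so $f_1 f_2 \bmod h$ is a polynomial of degree less than $n = \deg h$ that agrees with $f_1 f_2$ on all of $A$, hence is constant on each $A_i$; thus $f_1 f_2 \bmod h \in R$. Since $R$ contains $1$ and is already an additive group, this makes $(R, +, \cdot \bmod h)$ a commutative ring with identity. I expect the only mildly delicate point is bookkeeping the degree bound on $g^j$ to confirm membership in the degree-$<n$ space; everything else is a direct consequence of Lagrange interpolation and the distinct-degrees argument, so there is no real obstacle here.
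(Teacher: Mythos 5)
Your proof is correct and takes essentially the same approach as the paper's. The only cosmetic difference is that you compute $\dim R = \frac{n}{r+1}$ abstractly via the evaluation isomorphism $\mathbb{F}_q[x]_{<n}\cong\mathbb{F}_q^n$ and the block-constant subspace, whereas the paper first exhibits an explicit indicator-polynomial basis $\{f_i\}$ (one per block $A_i$) before switching to the powers of $g$; the remaining steps---distinct degrees of $g^j$ giving linear independence, and reduction modulo $h$ preserving values on $A$---are identical.
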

The following theorem shows that the CSS code obtained from the classical code given by Construction~\ref{constr} is a qLRC. 
\begin{theorem}\label{th1}
The linear code $\mathcal{C}$ defined by Construction~\ref{constr} is an $[n,k]_q$ code with locality $r$. Moreover, it is dual-containing. Therefore, the CSS code $\mathscr{C}=\text{CSS}(\mathcal{C},\mathcal{C})$ is a qLRC of length $n$, dimension $2k-n$, alphabet size $q$ and locality $r.$ \end{theorem}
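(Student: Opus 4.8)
The plan is to verify the three asserted properties of $\mathcal{C}$ — that it is an $[n,k]_q$ code, that it has locality $r$, and that it is dual-containing — and then to invoke Corollary~\ref{cor1}, which immediately yields that $\mathscr{C}=\mathrm{CSS}(\mathcal{C},\mathcal{C})$ is an $[[n,2k-n,\delta]]_q$ qLRC of locality $r$ with $\delta=\mathrm{wt}(\mathcal{C}\setminus\mathcal{C}^\perp)$.

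The first two properties are routine. The polynomials in $S=S_1\sqcup S_2$ have pairwise distinct degrees, all less than $n$: those in $S_2$ have degrees $0,r+1,\dots,n-(r+1)$, and those in $S_1$ are the $k-\tfrac{n}{r+1}$ smallest values of the form $i+j(r+1)$ with $1\le i\le r-1$, of which there are $\tfrac{(r-1)n}{r+1}\ge k-\tfrac{n}{r+1}$ below $n$. Hence $\dim\Span(S)=k$, and since $\ev_{A,\boldsymbol u}$ is injective on $\{f:\deg f<n\}$ (Vandermonde, together with $u_i\ne0$, which \eqref{tiki} forces), $\mathcal{C}$ is $[n,k]_q$. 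For locality, I would write $f\in\Span(S)$ as $f(x)=\sum_j f_j(x)g(x)^j$ with $\deg f_j\le r-1$; on a block $A_\ell$, where $g\equiv\gamma_\ell:=g(A_\ell)$, this gives $f|_{A_\ell}=\tilde f_\ell|_{A_\ell}$ with $\tilde f_\ell(x):=\sum_j\gamma_\ell^{\,j}f_j(x)$ of degree $\le r-1$, so $c_t=u_tf(\alpha_t)$ with $\alpha_t\in A_\ell$ is recovered from the other $r$ symbols of $A_\ell$ by dividing out the $u_i$'s, interpolating $\tilde f_\ell$ through those $r$ points, and re-evaluating at $\alpha_t$.

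The heart of the argument — and the main obstacle — is dual-containment, $\mathcal{C}^\perp\subseteq\mathcal{C}$. First I would observe that \eqref{tiki} says $(u_1^2,\dots,u_n^2)$ is orthogonal to $(\alpha_1^t,\dots,\alpha_n^t)$ for every $t\le n-2$, hence spans the one-dimensional dual of the $(n-1)$-dimensional GRS code, so $u_i^2=\lambda/L'(\alpha_i)$ for a nonzero $\lambda$, where $L(x)=\prod_i(x-\alpha_i)$. Consequently, for $a,b$ of degree $<n$, $\langle\ev_{A,\boldsymbol u}(a),\ev_{A,\boldsymbol u}(b)\rangle=\lambda\sum_i\tfrac{a(\alpha_i)b(\alpha_i)}{L'(\alpha_i)}=\lambda\,[x^{n-1}]\big((ab)\bmod L\big)$, so with $V:=\Span(S)$ we get $\mathcal{C}^\perp=\ev_{A,\boldsymbol u}(V^\perp)$, where $V^\perp$ is the orthogonal complement of $V$ in $\mathbb F_q[x]_{<n}$ for the symmetric bilinear form $B(a,b):=[x^{n-1}]((ab)\bmod L)$; it then suffices to show $V^\perp\subseteq V$. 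Normalizing $g$ to be monic, each $g-\gamma_\ell$ is the annihilator polynomial of $A_\ell$ (Lemma~\ref{lem:good_poly}), so $L=N(g)$ with $N(z):=\prod_\ell(z-\gamma_\ell)$ monic of degree $m:=\tfrac{n}{r+1}$ (and the $\gamma_\ell$ distinct). Using this factorization and the residue identity $\sum_i\tfrac{h(\alpha_i)}{L'(\alpha_i)}=[x^{n-1}](h\bmod L)$, I would evaluate $B$ on the graded basis $e_d:=x^{i(d)}g(x)^{j(d)}$ of $\mathbb F_q[x]_{<n}$ (with $d=i(d)+j(d)(r+1)$, $0\le i(d)\le r$, $0\le j(d)\le m-1$): one gets $B(e_a,e_b)=0$ if $i(a)+i(b)<r$, and $B(e_a,e_b)=H_{\,i(a)+i(b)-r}\,\nu_{\,j(a)+j(b)}$ if $i(a)+i(b)\ge r$, where $\nu_c:=\sum_\ell\gamma_\ell^{\,c}/N'(\gamma_\ell)=[z^{m-1}](z^c\bmod N)$ (hence $\nu_c=0$ for $c\le m-2$, $\nu_{m-1}=1$) and $H_t$ is the complete homogeneous symmetric function of degree $t$ in the roots of $g-\gamma_\ell$. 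The crucial point making this clean is that for $t\le r$ this $H_t$ does not depend on $\ell$ — for such $t$ it is a polynomial in the elementary symmetric functions $e_1,\dots,e_t$, which are fixed coefficients of $g$, since only $e_{r+1}$ involves $\gamma_\ell$ — and that $H_0=1$.

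With the form $B$ in hand I would finish as follows. Write $V=\bigoplus_{i=0}^{r-1}x^iR_{m_i}$, where $R_\mu:=\Span\{1,g,\dots,g^{\mu-1}\}$, $m_0=m$, and $m_0\ge m_1\ge\cdots\ge m_{r-1}\ge0$ with $\sum_im_i=k$ (the ``smallest degree'' rule makes the exponents of $g$ occurring with each $x^i$ an initial segment $\{0,\dots,m_i-1\}$, and makes $m_i$ nonincreasing in $i$). Consider $V':=R_m\oplus\bigoplus_{i=1}^{r-1}x^iR_{\,m-m_{r-i}}$. Then (i) $\dim V'=m+\sum_{i=1}^{r-1}(m-m_{r-i})=rm-(k-m)=n-k=\dim V^\perp$; (ii) $B(V',V)=0$: for basis vectors $e_a\in V'$, $e_b\in V$ with $i(a)+i(b)\ge r$ one has $i(a)\ge r-i(b)\ge1$, hence $j(b)\le m_{i(b)}-1\le m_{r-i(a)}-1$ (monotonicity, since $i(b)\ge r-i(a)$) and $j(a)\le m-m_{r-i(a)}-1$, so $j(a)+j(b)\le m-2$ and $B(e_a,e_b)=H_{i(a)+i(b)-r}\,\nu_{j(a)+j(b)}=0$ (the case $i(a)+i(b)<r$ being immediate); and (iii) $V'\subseteq V$, i.e.\ $m-m_{r-i}\le m_i$, equivalently $m_i+m_{r-i}\ge m$ for $1\le i\le r-1$. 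Item (iii) is the only genuinely arithmetic point: writing the largest degree occurring in $S_1$ as $\theta=q_0(r+1)+s$ with $1\le s\le r-1$ gives $m_i=q_0+\mathbf{1}\{s\ge i\}$ for $1\le i\le r-1$ and $k-m=(r-1)q_0+s$, and since $k>n/2$ one obtains $2q_0=\tfrac{2(k-m-s)}{r-1}>m-\tfrac{2s}{r-1}$; a short case distinction on whether $s\ge i$, $s\ge r-i$, both, or neither (using $s\le r-1$ in general and $\tfrac{2s}{r-1}<1$ in the ``neither'' case) then yields $m_i+m_{r-i}\ge m$ in all cases. By (i)–(iii), $V'\subseteq V^\perp$ and $\dim V'=\dim V^\perp$, so $V^\perp=V'\subseteq V$, i.e.\ $\mathcal{C}^\perp\subseteq\mathcal{C}$; Corollary~\ref{cor1} then gives the claimed qLRC. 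I expect essentially all the work to be in the form computation of the third paragraph (the symmetric-function identity) and in the inequality $m_i+m_{r-i}\ge m$.
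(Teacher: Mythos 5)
Your proof is correct, and the locality and dimension arguments are essentially the paper's. The substantial difference is in how you establish dual-containment. The paper works with the pairing $\sum_i u_i^2 f_1(\alpha_i)f_2(\alpha_i)$ directly: it defines $T=T_1\sqcup S_2$, observes $T\subset S$, and then reduces the orthogonality of $\ev(f_1)$ and $\ev(f_2)$ for $f_1\in S$, $f_2\in T$ to the single inequality $\deg\bigl(f_1f_2 \bmod h\bigr)\le n-2$, proved in two cases — a direct degree count $\ell+\ell'\le n-2$ (Lemma~\ref{calc}) when both factors lie in $S_1\cup T_1$, and the ring structure of $R$ (Lemma~\ref{lem:good_poly}) to cap the degree when $i_1+i_2<r$. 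Your route instead recognizes that \eqref{tiki} pins $u_i^2$ to $\lambda/L'(\alpha_i)$, converts the pairing into the form $B(a,b)=[x^{n-1}]\bigl((ab)\bmod L\bigr)$, and then exploits the factorization $L=N(g)$ to evaluate $B$ on the monomial basis $x^{i}g^{j}$ via a two-layer residue computation (over each orbit $A_\ell$, then over the values $\gamma_\ell$), yielding the explicit formula $B(e_a,e_b)=H_{i(a)+i(b)-r}\,\nu_{j(a)+j(b)}$. Your $V'$ in fact coincides with $\Span(T)$ in the paper — the dimensions $m-m_{r-i}$ per $x^i$-block work out to be exactly those of $T_1\sqcup S_2$ under the smallest-degree rule — and your inequality $m_i+m_{r-i}\ge m$ (from $k>n/2$) plays the role of the paper's Lemma~\ref{calc}. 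So the decomposition is the same; what you buy with the extra work is an explicit, structural description of the bilinear form (which makes the vanishing manifest rather than a degree bound), at the cost of the symmetric-function computation. The paper's argument is more economical; yours is more revealing and would, e.g., immediately give the Gram matrix of $\mathcal{C}$ on this basis.
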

\begin{proof} 
To prove that the code $\mathscr{C}$ is a qLRC of length $n$ and locality $r$ over $\mathbb{F}_q$, it suffices, by Corollary \ref{cor1}, to show that the code $\mathcal{C}$ is a classical LRC of length $n$ and locality $r$ satisfying $\mathcal{C}^\perp \subset \mathcal{C}$. Clearly, the code $\mathcal{C}$ is a linear code of length $n$ over $\mathbb{F}_q$.

Next, we show that the locality is $r$. Consider an arbitrary polynomial $f(x) \in \Span(S)$ and let $\boldsymbol{c} = (c_1,c_2, \ldots, c_n)$ be the codeword corresponding to it, i.e., $c_i = u_i f(\alpha_i)$ for all $i \in [n]$. Assume that the erased code symbol is $c_z = u_z f(\alpha_z)$, where $\alpha_z \in A_j$. Since $g(x)$ is a good polynomial, by the structure of the polynomials in $\Span(S)$ it holds that the polynomial 
$$
\lambda(x) := f(x) \mod (g(x) - g(A_j))
$$ 
is of degree at most $r - 1$, where $g(A_j)$ is the value that $g(x)$ attains on the set $A_j$. Furthermore, $\lambda(\alpha) = f(\alpha)$ for any $\alpha \in A_j$. Hence, $\lambda(x)$ can be interpolated from the $r$ symbols $\lambda(\alpha_i) = f(\alpha_i) = u_i^{-1} c_i$, $\alpha_i \in A_j \setminus \{\alpha_z\}$, and then we output the value $u_z \lambda(\alpha_z) = u_z f(\alpha_z) = c_z$.

Now, we focus on proving that $\mathcal{C}$ has dimension $k$ and is dual-containing. Let $u := \frac{n}{r + 1}$. Observe that the largest degree in $S_1$ corresponds to the following values of $i$ and $j$:
\begin{itemize}
    \item $i = r - 1$ and $j = \left(\frac{k - u}{r - 1}\right) - 1$ if $(k - u) \equiv 0 \text{ (mod } r - 1)$, and
    \item $i = (k - u) \text{ (mod } r - 1)$ and $j = \left\lfloor \frac{k - u}{r - 1} \right\rfloor$ if $(k - u) \not\equiv 0 \text{ (mod } r - 1)$.
\end{itemize}
Thus, the largest degree in $S_1$ is given by
\begin{equation} \label{eq:largest_deg}
    \ell =
\begin{cases} 
    (r + 1) \left( \frac{k - u}{r - 1} \right) - 2 & \text{if } (k - u) \equiv 0 \text{ (mod } r - 1), \\
    (r + 1) \left\lfloor \frac{k - u}{r - 1} \right\rfloor + (k - u) \text{ (mod } r - 1) & \text{if } (k - u) \not\equiv 0 \text{ (mod } r - 1).
\end{cases}
\end{equation}

One can verify that since $k \leq \frac{nr}{r + 1}$, any polynomial in $S_1$ is of degree at most $n - 2$, and thus so is any polynomial in $\Span(S)$. This implies that the mapping $\ev_{A, \boldsymbol{u}} : \Span(S) \rightarrow \mathbb{F}_q^n$ is injective, and the minimum distance of the code is at least $2$. Furthermore, since $S_1$ and $S_2$ are disjoint, the set $S$ consists of $k$ polynomials of different degrees, and therefore, $\dim(\Span(S)) = k$. Together with the fact that the mapping $\ev_{A, \boldsymbol{u}}$ is injective, we conclude that the code has dimension $k$.

Define the set $T = T_1 \sqcup S_2$ of $n - k$ polynomials, where $T_1$ is the set of $n - k - u$ polynomials of the form $x^i g(x)^j$ with the smallest possible degree and $1 \leq i \leq r - 1$. As before, one can verify that the largest degree of a polynomial in $T_1$ corresponds to the following values of $i$ and $j$:
\begin{itemize}
    \item $i = r - 1$ and $j = \left( \frac{n - k - u}{r - 1} \right) - 1$ if $(n - k - u) \equiv 0 \text{ (mod } r - 1)$, and
    \item $i = (n - k - u) \text{ (mod } r - 1)$ and $j = \left\lfloor \frac{n - k - u}{r - 1} \right\rfloor$ if $(n - k - u) \not\equiv 0 \text{ (mod } r - 1).$
\end{itemize}
Thus, the largest degree in $T_1$ is given by
$$
\ell' =
\begin{cases} 
    (r + 1) \left( \frac{n - k - u}{r - 1} \right) - 2 & \text{if } (n - k - u) \equiv 0 \text{ (mod } r - 1), \\
    (r + 1) \left\lfloor \frac{n - k - u}{r - 1} \right\rfloor + (n - k - u) \text{ (mod } r - 1) & \text{if } (n - k - u) \not\equiv 0 \text{ (mod } r - 1).
\end{cases}
$$

Recall that $S_1$ and $T_1$ contain polynomials of the same form, but $|S_1| = k - u > n - k - u = |T_1|$, where the inequality follows since $k > n / 2$. Therefore, $T_1 \subseteq S_1$, which implies that $T \subset S$ and $\Span(T) \subset \Span(S)$.

Now, let us define the code 
$$
\mathcal{D} = \{\ev_{A, \boldsymbol{u}}(f(x)) : f(x) \in \Span(T)\}.
$$ 
Clearly, $\mathcal{D}$ is a linear code of length $n$, dimension $n - k$, and satisfies $\mathcal{D} \subset \mathcal{C}$. We will now show that $\mathcal{C}^\perp = \mathcal{D}$.

To show that $\mathcal{C}^\perp = \mathcal{D}$, it is enough to show that %{\color{red} change to $\alpha_i$? VR}
$$
\ev(f_1(x)) \cdot \ev(f_2(x)) = \sum_{i=1}^{n} u_i^2 f_1(\alpha_i) f_2(\alpha_i) = 0
$$
for any $f_1(x) \in S$ and $f_2(x) \in T$. Let $h(x)$ be the annihilator polynomial of $A$, define $z(x) := (f_1(x) f_2(x)) \mod h(x)$, and note that 
$$
\sum_{i=1}^{n} u_i^2 f_1(\alpha_i) f_2(\alpha_i) = \sum_{i=1}^n u_i^2 z(\alpha_i).
$$
Recall that  by \eqref{tiki}, it holds that 
$$
\sum_{i=1}^{n} u_i^2 \alpha_i^j = 0
$$
for $0 \le j \leq n - 2$. Therefore, the result will follow if we show that $\deg(z(x)) \leq n - 2$. For this, we consider the following two cases.

\vspace{0.2cm}
\paragraph*{\bf{Case 1}} Suppose $f_1(x) = x^{i_1} g(x)^{j_1} \in S_1$ and $f_2(x) = x^{i_2} g(x)^{j_2} \in T_1$. In this case, we will need the following lemma, whose proof appears in Appendix \ref{appendix-calc}.
\begin{lemma} \label{calc}
   $\ell$ and $\ell'$ satisfy $\ell + \ell' \leq n - 2$.
\end{lemma}
By Lemma \ref{calc}, $z(x) = f_1(x) f_2(x)$ and is of degree at most $n - 2$.

\vspace{0.2cm}
\paragraph*{\bf{Case 2}} Suppose $f_1(x) = x^{i_1} g(x)^{j_1} \not\in S_1$ or $f_2(x) = x^{i_2} g(x)^{j_2} \not\in T_1$, or both. In this case, $f_1(x) f_2(x) = x^{i_1 + i_2} g(x)^{j_1 + j_2}$, where $i_1 + i_2 < r$. Note that $g(x)^{j_1 + j_2} \mod h(x) \in R$, where $R$ is as defined in Lemma~\ref{lem:good_poly}. Therefore, $g(x)^{j_1 + j_2} \mod h(x)$ is spanned by the polynomials $g(x)^i$ for $i = 0, 1, \ldots, u - 1$, and in particular its degree is at most $(r + 1)(u - 1) = n - (r + 1)$. Therefore, 
$$
z(x) = (f_1(x) f_2(x)) \mod h(x) = x^{i_1 + i_2}(g(x)^{j_1 + j_2} \mod h(x)),
$$
which implies that 
$$
\deg(z(x)) \leq r - 1 + n - (r + 1) = n - 2.
$$

Thus, the code $\mathcal{C}$ is a dual-containing code. From Corollary \ref{cor1}, it follows that $\mathscr{C} = \text{CSS}(\mathcal{C}, \mathcal{C})$ is a quantum code of length $n$, dimension $2k - n$, and locality $r$.
\end{proof}

\vspace{0.2cm}
\paragraph*{\bf{Length of the code}}
For our qLRC construction, we need a set of \( n \) evaluation points \( A \subseteq \mathbb{F}_q \) and a vector \( \boldsymbol{u} \in \mathbb{F}_q^n \) satisfying \eqref{tiki}, i.e., a solution where all entries are quadratic residues. This raises the question of how long a qLRC can be given a finite field.

If the characteristic of the field is two, then every element of \( \mathbb{F}_q \) is a quadratic residue, and therefore such a vector \( \boldsymbol{u} \) always exists, which implies that the constraint in \eqref{tiki} can always be satisfied.

If \( q \) is an odd prime power, there are several ways to construct relatively long codes. For example, by letting \( A = \mathbb{F}_q \) (i.e., a code of length \( q \)), one can easily verify that setting \( u_i = 1 \) for all \( i \) satisfies \eqref{tiki}.
Another example is to let \( A = \{\alpha_1,\alpha_2, \ldots, \alpha_{\frac{q-1}{2}}\} \) be the set of quadratic residues of \( \mathbb{F}_q \). Then, for \( u_i \in \mathbb{F}_q \) satisfying \( u_i^2 = \alpha_i \), one can verify that \eqref{tiki} is also satisfied. Note that for both of these examples, one needs to construct the corresponding partition of \( A \) with the corresponding good polynomial. We note that in both cases, this can be done by relying on additive and multiplicative subgroups of the field. See Section \ref{sec:AGL} for more details.
%Let $q$ be an odd prime power, $\alpha$ be a primitive element of $\mathbb{F}_q$ and $q-1=2nt$ for some $t\in\mathbb{N}.$ Define $\beta = \alpha^{2t}$, which is a primitive $n$-th root of unity and consider the set $A= \{1, \beta, \beta^2, \dots, \beta^{n-1}\}.$ It is easy to see that the vector $\boldsymbol{v}=(1,\beta,\beta^2,\ldots,\beta^{n-1})$ lies in the null space of $V$, \ie $V\cdot\boldsymbol{v}=\bold{0}.$ Moreover, for each $1 \leq i \leq n$, we have $v_i = \beta^{i-1} = u_i^2$, where $u_i = \alpha^{t(i-1)}.$ Thus, $v_i$ is indeed a square in $\mathbb{F}_q$, as required.

\vspace{0.2cm}
\paragraph*{\bf{Comparison with the qLRC construction in \cite{Golowich}}}
In \cite{Golowich}, a qLRC construction is presented for parameters such that \( n = q - 1 \) and \( (r + 1) \mid n \). First, a dual-containing LRC is constructed, and then the CSS approach is followed to obtain a qLRC. Specifically, the set of evaluation points is chosen to be \( A = \mathbb{F}_q^* \), where the partition of \( A \) is given by the cosets of the multiplicative subgroup of \( \mathbb{F}_q^* \) of order \( r + 1 \) together with the good polynomial \( g(x) = x^{r+1} \). Using these choices, the authors of \cite{Golowich} constructed a dual-containing LRC by modifying the construction in \cite{Tamo} slightly differently from our approach. However, our construction is more general and flexible, as it can be viewed as the quantum analog of the construction given in \cite{Tamo}. More precisely, any partition of the set of evaluation points together with a good polynomial that is used to construct an LRC in \cite{Tamo} can also be used to construct a qLRC using our construction.
%{\color{red} VR: not true for non-binary charac fields?}
%$A$ and  hour The ability to pick different evaluation points and good polynomials enables us to construct qLRCs for parameters that are not possible using the construction in \cite{Golowich}. An example of this is provided in the sequel. 

%Our construction, however, may not always yield qLRCs of length one less than field size.  If $q$ is even, with the same choices as in \cite{Golowich}, our construction will result in a qLRC of length $n=q-1$. For odd $q$, in the worst case, we can get a qLRC of length $n=\frac{q-1}{2}$. For $q=2n+1$, choose set $A= \{1, \beta, \beta^2, \dots, \beta^{n-1}\}$, where $\beta$ is a primitive $n$-th root of unity in $\mathbb{F}_q$. As explained above, this choice of $A$ ensures that there is a vector $\boldsymbol{v}\in(\mathbb{F}_q^\ast)^n$ in the null space of $V$ such that each $v_i$ is a square. By picking $A_1$ to be the multiplicative subgroup of $\mathbb{F}_q^*$ of order $(r+1)$, $A_2,\dots,A_{\frac{n}{r+1}}$ as cosets of $A_1$, and good polynomial $g(x)=x^{r+1}$, we can obtain the required qLRC of length $n=\frac{q-1}{2}$.

Next we now provide an example of our qLRC construction whose parameters  can not be obtained using the construction in \cite{Golowich}.  
\begin{example}
\rm{
Let $\alpha$ be a primitive element of $\mathbb{F}_{32},$ and define the additive subgroup $A_1 =\{0,1,\alpha,1+\alpha\}$ of $\mathbb{F}_{32}$ of order $4.$ Let $A_1, A_2,\ldots, A_8$ be the cosets of $A_1$ in $\mathbb{F}_{32},$ and let $A = \mathbb{F}_{32}$ be the evaluation set.
Then $g(x) = x^4 +(\alpha^2 +\alpha+1)x^2 +(\alpha^2 +\alpha)x$ is a good polynomial that is constant on the cosets $A_i$. One can easily verify that the vector $\boldsymbol{u}=(u_1,u_2, \ldots, u_{32}) = (1,1,\ldots,1)$ satisfies \eqref{tiki}. Further, set 
$$
S = \{x^ig(x)^j: 1\leq i\leq 2 ~\&~ 0\leq j\leq 4\} \cup \{xg(x)^5\}\cup\{g(x)^i:0\leq i\leq 7\}. 
$$ 
Then the code 
$$
\mathcal{C} = \left\{\ev_{A, \boldsymbol{u}}(f(x)) : f(x) \in \Span(S)\right\}
$$
is a dual-containing LRC over $\mathbb{F}_{32}$ of length $32$, dimension $19$, and locality $3$. 

The code $\mathscr{C} = \text{CSS}(\mathcal{C}, \mathcal{C})$ is a qLRC of length $32$, dimension $6$, alphabet size $32$, and locality $3$. The largest degree in $S$ is $28$ and, therefore, the minimum distance $\delta \ge n-28=4$ (see \eqref{eq:bound_degree} for details). Theorem~\ref{main-prop} in Section~\ref{sec:distance_bound} gives a better lower bound on the minimum distance: $\delta \ge 5$. }
\end{example}
\section{Good Polynomials from AGL}\label{sec:AGL}
In this section, we introduce a new approach for constructing good polynomials. This approach generalizes the good polynomial constructions in \cite{Tamo}. We begin by providing some basic definitions from group theory which are needed to describe our approach.  
\subsection{Group Action}
Let $G$  be a group and $X$ be a nonempty set. A \textit{group action} of $G$ on $X$ is a function $\bm{\cdot}: G \times X \to X$ that satisfies the following properties: 
\begin{enumerate}
    \item $e \bm{\cdot} x = x$ for all $x \in X$, where $e$ is the identity element of $G.$
    \item  $(gh) \bm{\cdot} x = g \bm{\cdot} (h \bm{\cdot} x)$ for all  $g, h \in G \text{ and } x \in X.$
\end{enumerate}
In the sequel, we will view the elements of $G$ as functions on $X$ and write $h(x)$ instead of $h\bm{\cdot} x$ for $h\in G$ and $x\in X$. 

We will also need the following definition. 
\begin{defi}
A group $G$ acts \emph{regularly} on a set $X$ if the action of $G$ on $X$ is both transitive and free. This means:
\begin{enumerate}
    \item (\emph{Transitivity}) For any $x,y \in X$, there exists $g \in G$ such that $g(x) = y $.
    \item (\emph{Free Action}) If $g(x) = x$ for some $x \in X$, then $g$ must be the identity element $e \in G$.
\end{enumerate}
\end{defi}
One can easily verify that in this case of regular action, the order of $G$ equals the size of $X$, i.e., $|G| = |X|$.

For an element $x\in X$, the \textit{orbit} of $x$ under the action of $G$ is defined as $$\mathrm{Orb}(x) = \{ g(x) \mid g \in G \}.$$ The orbit of $x\in X$ consists of all elements of $X$ that can be reached from $x$ by the action of the group $G.$

Note that 
\begin{enumerate}
     \item If $x, y \in X$ are two elements, then their orbits are either disjoint or identical, i.e., $$\mathrm{Orb}(x) \cap \mathrm{Orb}(y) = \emptyset \quad \text{if}\quad \mathrm{Orb}(x) \neq \mathrm{Orb}(y).$$
    \item The union of all orbits covers the entire set $X$, i.e.,  $$X = \bigcup_{x \in X} \mathrm{Orb}(x)$$
\end{enumerate}
Thus, the orbits form a partition of $X.$
\subsection{Affine General Linear Group}
We begin with the definition of the affine general linear group. 
%For our purpose, the interesting case is when the dimension $m=1.$ In this case, we have $V=\mathbb{F}_q.$ 
 \begin{defi}
     Given a finite field $\mathbb{F}_q$, the affine general linear group $\text{AGL}$\footnote{
The more general definition of $\text{AGL}$ is a group that acts on the  $m$-dimensional vector space $\mathbb{F}_q^m$ for some $m.$ The definition above is restricted to the simple case where $m=1$.% restricts our attention  An \textit{affine transformation} on $V$ is a map $f: V \to V$ of the form $$f(\mathbf{x}) = M\mathbf{x} + \mathbf{b}$$, where $M: V\to V$ is an invertible linear transformation, represented by an $ m \times m$ matrix over $\mathbb{F}_q$ and $\mathbf{b}\in V.$ The set $$\text{AGL}(n, \mathbb{F})=\{f\mid f\text{ is an affine transformation of }V\}$$ is a group under the operation of composition of mappings and is called the \textit{affine general linear group} of $V.$
} consists of all the transformations (polynomials) of the form $$ax + b\in \mathbb{F}_q[x], \text{ where } a,b \in \mathbb{F}_q \text{ and } a\neq 0,
$$
and where the group operation is function composition.% is a translation element from $\mathbb{F}.$  
 \end{defi}
\noindent
Note that $\text{AGL}$ acts on the finite field $\mathbb{F}_q$ in the obvious way, i.e., for $\alpha \in \mathbb{F}_q$ and $f(x) = ax + b \in \text{AGL}$, we have 
$f(\alpha) = a\alpha + b.$
%The order of $\text{AGL}(1,\mathbb{F})$ is   $|\text{AGL}(1,\mathbb{F})|=q(q-1)$, where $q=|\mathbb{F}|$ is the size of the finite field.

 The following theorem provides a method to obtain good polynomials using subgroups of $\text{AGL}$.  
\begin{theorem}\label{th4}
Let $H $ be a subgroup of $\text{AGL}.$ Then, the polynomial
\begin{equation*}\label{good}
    g(x) = \prod_{f \in H}(x - f(\alpha)) 
\end{equation*} for some $\alpha \in \mathbb{F}_q$, is constant on the orbits of $H.$ Moreover, the degree of the polynomial $g(x)$ is equal to the order of the group $H.$
\end{theorem}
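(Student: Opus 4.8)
The plan is to exploit the fact that $H$ acts on $\mathbb{F}_q$ (as a subgroup of $\mathrm{AGL}$), so its orbits partition $\mathbb{F}_q$, and to show that the polynomial $g(x) = \prod_{f \in H}(x - f(\alpha))$ depends only on the orbit of $\alpha$. First I would fix $\alpha \in \mathbb{F}_q$ and let $O = \mathrm{Orb}(\alpha) = \{f(\alpha) : f \in H\}$. The key observation is that for any $h \in H$, left multiplication by $h$ is a bijection of $H$ onto itself, hence the multiset $\{f(\alpha) : f \in H\}$ equals the multiset $\{(h \circ f)(\alpha) : f \in H\} = \{h(f(\alpha)) : f \in H\}$. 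In other words, $h$ permutes the roots of $g(x)$. Each $h \in H$ has the form $h(x) = cx + d$ with $c \neq 0$, so $h$ is an affine bijection of $\mathbb{F}_q$.

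Next I would compute $g(h^{-1}(x))$ for $h \in H$ (equivalently, relate $g$ evaluated at two points in the same orbit). Writing $h(x) = cx+d$, we have $g(h(x)) = \prod_{f \in H}(h(x) - f(\alpha)) = \prod_{f \in H}(cx + d - f(\alpha))$. Since $\{f(\alpha)\}_{f \in H}$ is permuted by $h$, there is a reindexing $f \mapsto f'$ with $h(f'(\alpha)) = f(\alpha)$, i.e. $f(\alpha) = c f'(\alpha) + d$, so $cx + d - f(\alpha) = c(x - f'(\alpha))$. Therefore $g(h(x)) = \prod_{f' \in H} c\,(x - f'(\alpha)) = c^{|H|} g(x)$. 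This shows $g(h(x)) = c^{|H|} g(x)$ as polynomials, where $c$ is the leading coefficient (multiplicative part) of $h$. Now if $\beta = h(\alpha)$ lies in the same orbit as $\alpha$, then $g(\beta) = g(h(\alpha)) = c^{|H|} g(\alpha)$; to conclude $g(\beta) = g(\alpha)$ it suffices to know $g(\alpha) = 0$ — which is immediate, since $\alpha = e(\alpha)$ is itself a root of $g$. Hence $g$ is identically $0$ on the orbit of $\alpha$, and in particular constant ($=0$) on each orbit.

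Wait — this argument as stated only shows $g$ vanishes on $O = \mathrm{Orb}(\alpha)$; I should double check it also shows $g$ is constant on every *other* orbit $O'$. For a point $\gamma$ in a different orbit, pick $\gamma' = h(\gamma)$ in the same orbit $O'$; then $g(\gamma') = c^{|H|} g(\gamma)$. So $g$ is not literally constant on $O'$ in general unless $c^{|H|} = 1$ for all relevant $h$, or unless $g(\gamma) = 0$. Let me reconsider: actually, for each $h = cx+d \in H$, the relation $g(h(x)) = c^{|H|}g(x)$ combined with the group structure forces, by taking $h$ to range over $H$ and using that the map $h \mapsto c$ is a homomorphism $H \to \mathbb{F}_q^*$, that $c^{|H|}=1$ automatically? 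Not in general. The cleaner route: I would instead directly argue that for $\beta, \beta'$ in the same orbit with $\beta' = h(\beta)$, we want $g(\beta) = g(\beta')$, and the factor $c^{|H|}$ must be handled. The safe fix is to observe that the subgroup $H$, being finite, has the property that the product of all $c$'s over $f \in H$ equals... hmm. Actually the genuinely clean statement: take $\beta' = h(\beta)$, $h(x)=cx+d$; then $c$ is the image of $h$ under the quotient homomorphism $H \to H/(H \cap \text{translations})$, and $c$ has finite order $t$ dividing $|H|$, so $c^{|H|} = 1$. That resolves it: $g(h(x)) = c^{|H|}g(x) = g(x)$, so $g(\beta') = g(\beta)$, proving constancy on each orbit.

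For the degree claim, $g(x) = \prod_{f \in H}(x - f(\alpha))$ is manifestly monic of degree $|H|$ regardless of multiplicities among the $f(\alpha)$, so $\deg g = |H| = \mathrm{ord}(H)$; this part is immediate and needs no work. The main obstacle I anticipate is precisely the bookkeeping around the leading-coefficient factor $c^{|H|}$: making rigorous that $c$ has order dividing $|H|$ (via the homomorphism $H \to \mathbb{F}_q^*$ sending $cx+d \mapsto c$, whose image is a finite subgroup of $\mathbb{F}_q^*$ of order dividing $|H|$ by Lagrange) and hence $c^{|H|}=1$. Everything else — orbits partitioning $\mathbb{F}_q$, $H$ permuting the roots of $g$, the reindexing giving $g(h(x)) = c^{|H|}g(x)$ — is routine manipulation.
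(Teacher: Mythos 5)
Your proposal is correct and takes essentially the same route as the paper: fix $\theta(x)=ax+b\in H$, re-index the product $\prod_{f\in H}(\theta(\beta)-f(\alpha))$ by composing with $\theta^{-1}$, and absorb the leading factor $a^{|H|}$ by showing it equals $1$. Your justification for $a^{|H|}=1$ (the homomorphism $H\to\mathbb{F}_q^\ast$, $ax+b\mapsto a$, whose image has order dividing $|H|$) is in fact slightly cleaner than the paper's phrasing, which appeals to $\ord(\theta)=\ord(a)$ -- a relation that fails for pure translations $\theta(x)=x+b$ with $b\neq 0$, though the conclusion $a^{|H|}=1$ is of course still correct there.
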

\begin{proof}
Let $\beta$ and $\theta(\beta)$ be two elements in the same orbit, where $\theta(x) = ax + b \in H.$ It is well-known that the order of $\theta$, denoted by $\ord(\theta)$, is equal to the multiplicative order of $a$, i.e., $\ord(\theta) = \ord(a)$ and $\ord(\theta)$ divides the order of the group $H.$

Now, we observe that
\begin{align*}
    g(\theta(\beta)) &= \prod_{f \in H}(\theta(\beta) - f(\alpha)) = \prod_{f \in H}(a\beta + b - f(\alpha))\\ &= a^{|H|} \prod_{f \in H} \left(\beta - a^{-1}\left(f(\alpha) - b\right)\right).
\end{align*}
Since  $\theta^{-1} (x)=a^{-1}(x-b)\in H$ and the product runs over all $f\in H$, we have 
$$g(\theta(\beta)) = \prod_{f \in H} \left( \beta - \theta^{-1}(f(\alpha)) \right) = g(\beta).$$
Thus, $g(x)$ is constant on the orbits of $G$. The claim about the degree of $g(x)$ follows from its definition.
\end{proof}
%From now onwards, throughout this section, let $\mathbb{F} = \mathbb{F}_q$ and 
%Let $A \subset \mathbb{F}$ be a set of size $n$ and $(r+1)\mid n$. 
%By Remark (2) above, we observe that the set $A$ can be partitioned as $\mathcal{A} = \{ A_1, A_2, \ldots, A_u \}$,  where $A_i = \mathrm{Orb}(\alpha_i)$ and $|A_i|=r+1$ for $1 \leq i \leq u.$

\subsection{Good polynomials  via subgroups of AGL}  Let $H$ be a subgroup of $\text{AGL}$ of order $r+1$, such that it has $n/(r+1)$ orbits $A_i$ in $\mathbb{F}_q$, each of size $r+1$. Equivalently, $H$ acts regularly on each $A_i$.  Let $A = \sqcup_{i=1}^{\frac{n}{r+1}} A_i$ be the evaluation set of the code, consisting of $n$ elements. Then, by Theorem \ref{th4}, for some $\alpha \in \mathbb{F}_q$, the polynomial 
$$
g(x) = \prod\limits_{f \in H} (x - f(\alpha))
$$
is constant on each orbit $A_i$. Furthermore, since $\deg(g(x)) = r+1$, it follows that $g(x)$ is a good polynomial for the partition $\{A_1,A_2, \ldots, A_{\frac{n}{r+1}}\}$ of the set $A$.

In the above construction of good polynomials, we require $\frac{n}{r+1}$ orbits, each of size $r+1$, matching the order of the subgroup of $\text{AGL}$. Next, we present a method to achieve this.

Let $M, B \subseteq \mathbb{F}_q$. We say that $M$ is \textit{closed under multiplication by $B$} if, for all $a \in M$ and $b \in B$, the product $ab$ lies in $M$, i.e., $\{ab \mid a \in M, b \in B\} \subseteq M$.

\begin{lemma}\label{lem:mult_add}
Let $\mathbb{K}$ be a subfield of $\mathbb{F}_q$, so that $\mathbb{F}_q$ is a vector space over $\mathbb{K}$. Let $B \subseteq \mathbb{F}_q$ be a subspace over $\mathbb{K}$, and thus closed under multiplication by $\mathbb{K}$. Finally, let $M \subseteq \mathbb{K}^\ast$ be a multiplicative subgroup of $\mathbb{K}$. Then, 
$$
H = \{ ax + b \mid a \in M \text{ and } b \in B \} 
$$ 
is a subgroup of $\text{AGL}$ with order $|H| = |M| \cdot |B|$. Furthermore, for any $\alpha \in \mathbb{F}_q$, the size of its orbit under the action of $H$ is given by
$$
|\mathrm{Orb}(\alpha)| =
\begin{cases}
    |B| & \text{if } \alpha \in B, \\
    |H| & \text{if } \alpha \notin B.
\end{cases}
$$
\end{lemma}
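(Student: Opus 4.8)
The plan is to break the claim into three parts: (i) $H$ is a subgroup of $\text{AGL}$; (ii) $|H| = |M|\cdot|B|$; and (iii) the orbit-size formula. For part (i), I would verify closure and inverses directly. Given $a_1x+b_1$ and $a_2x+b_2$ in $H$, their composition is $a_1(a_2x+b_2)+b_1 = (a_1a_2)x + (a_1b_2+b_1)$. Here $a_1a_2 \in M$ since $M$ is a multiplicative subgroup of $\mathbb{K}^\ast$, and $a_1b_2+b_1 \in B$ since $b_2 \in B$, $a_1 \in M \subseteq \mathbb{K}^\ast$, and $B$ is a $\mathbb{K}$-subspace (hence closed under multiplication by the scalar $a_1$ and under addition with $b_1 \in B$). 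The identity $x$ is in $H$ (take $a=1\in M$, $b=0\in B$). For inverses, $(ax+b)^{-1} = a^{-1}x - a^{-1}b$, and $a^{-1}\in M$, while $-a^{-1}b \in B$ by the same subspace argument. So $H$ is a subgroup. Part (ii) is immediate: the map $M\times B \to H$ sending $(a,b)\mapsto ax+b$ is a bijection, since $ax+b = a'x+b'$ as polynomials forces $a=a'$ and $b=b'$; hence $|H| = |M|\cdot|B|$.

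For part (iii), the key tool is the orbit–stabilizer theorem: $|\mathrm{Orb}(\alpha)| = |H|/|\mathrm{Stab}(\alpha)|$, where $\mathrm{Stab}(\alpha) = \{ax+b \in H : a\alpha + b = \alpha\}$. Fixing $a \in M$, the equation $a\alpha + b = \alpha$ determines $b = (1-a)\alpha$ uniquely, so the stabilizer contains the element $ax+b$ with this particular $a$ if and only if $(1-a)\alpha \in B$. Thus $|\mathrm{Stab}(\alpha)| = |\{a \in M : (1-a)\alpha \in B\}|$. If $\alpha \in B$: then for every $a\in M$ we have $1-a \in \mathbb{K}$ (since $a \in \mathbb{K}$), and $(1-a)\alpha \in B$ because $B$ is closed under multiplication by $\mathbb{K}$; hence $\mathrm{Stab}(\alpha)$ has size $|M|$, giving $|\mathrm{Orb}(\alpha)| = |H|/|M| = |B|$. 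If $\alpha \notin B$: I claim $(1-a)\alpha \in B$ forces $a = 1$. Indeed, if $a \neq 1$ then $1-a$ is a nonzero element of $\mathbb{K}$, hence invertible in $\mathbb{K}$, so $(1-a)\alpha \in B$ would imply $\alpha = (1-a)^{-1}\big((1-a)\alpha\big) \in B$ (again using closure of $B$ under multiplication by $\mathbb{K}$), contradicting $\alpha \notin B$. So $\mathrm{Stab}(\alpha) = \{x\}$ is trivial, and $|\mathrm{Orb}(\alpha)| = |H|$.

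I do not anticipate a serious obstacle here; the only point requiring care is keeping straight that the scalars $a \in M$ lie in the \emph{subfield} $\mathbb{K}$, which is exactly what makes "$B$ closed under multiplication by $\mathbb{K}$" applicable both to show $H$ is closed under composition/inversion and to pin down the stabilizer. One should also note in passing that the orbit-size dichotomy is consistent: when $\alpha \in B$, the whole orbit $\mathrm{Orb}(\alpha) = \{a\alpha + b : a\in M, b \in B\}$ is contained in $B$ (each $a\alpha \in B$ and $b \in B$), whereas when $\alpha \notin B$ the orbit has full size $|H|$. This also confirms that $B$ itself is partitioned into $H$-orbits of size $|B|$ (in particular, $0 \in B$ has orbit exactly $B$, since $\{b : b \in B\} = B$), which is the configuration needed for the good-polynomial construction preceding the lemma.
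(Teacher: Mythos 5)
Your proof is correct. It differs from the paper's in two minor but clean ways, both of which are worth noting. For the subgroup claim, the paper invokes finiteness of $H$ and checks only closure under composition; you verify inverses explicitly via $(ax+b)^{-1} = a^{-1}x - a^{-1}b$, which is a bit longer but self-contained and doesn't rely on the ``finite nonempty closed subset is a subgroup'' shortcut. For the orbit-size dichotomy, the paper computes the orbit directly as $\mathrm{Orb}(\alpha) = \bigcup_{a\in M}(a\alpha + B)$, a union of cosets of $B$, and counts how many of these cosets coincide (two coincide iff $(a-c)\alpha\in B$, which for $a\neq c$ happens iff $\alpha\in B$). You instead go through orbit--stabilizer: compute $\mathrm{Stab}(\alpha) = \{ax+(1-a)\alpha : a\in M,\ (1-a)\alpha\in B\}$ and divide $|H|$ by its size. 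The two arguments are dual to each other and hinge on the identical algebraic fact --- for a nonzero $c\in\mathbb{K}$, $c\alpha\in B$ iff $\alpha\in B$ because $B$ is a $\mathbb{K}$-subspace --- so the difference is one of bookkeeping rather than substance. The paper's coset picture is slightly more geometric and makes the partition of $\mathbb{F}_q\setminus B$ into full-size orbits visually obvious; your orbit--stabilizer route is the more mechanical/standard group-theory move and perhaps easier to reproduce under exam conditions. Your closing consistency remark (that $\mathrm{Orb}(\alpha)\subseteq B$ when $\alpha\in B$, and $\mathrm{Orb}(0)=B$) is a nice sanity check not present in the paper.
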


% From the above Lemma, we make the following remarks. 
% \begin{enumerate}
%     \item Since orbits partition a set, we observe that 
%     $$
%     \mathbb{F} = \bigcup\limits_{\alpha \in S} \mathrm{Orb}(\alpha) = \bigcup\limits_{\alpha \in S} \bigcup\limits_{a \in A} (a\alpha + B),
%     $$
%     where  $S \subseteq \mathbb{F}$ is a set of orbit representatives such that $\mathrm{Orb}(\alpha) \cap \mathrm{Orb}(\beta) = \emptyset$ for distinct $\alpha, \beta \in S.$
%     \item The degree of the polynomial $g(x)$ in \eqref{good} equals the order of the group $H$, and it is constant on each orbit under the action of $H.$ Therefore, the polynomial $g(x)$ induces a partition of $\mathbb{F}$ into $|S|$ distinct orbits, where $|S|-1$ orbits have size $|H|$, and one orbit has size $|B|.$
% \end{enumerate}
See Appendix~\ref{appendix:mult_add} for the proof of the lemma. 

Lemma \ref{lem:mult_add} gives $(q - |B|)/|H|$ orbits of size equal to $|H|$, meaning that $H$ acts regularly on these orbits. Therefore, if we let $H$ be a subgroup of $\text{AGL}$ as in Lemma~\ref{lem:mult_add}, then together with Theorem \ref{th4}, one can construct good polynomials that are constant on relatively many orbits. 

Next, we show some applications of this approach to construct good polynomials. We note that all these examples are already known and are given in Proposition 3.2 and Theorem 3.3 of \cite{Tamo}, although they use different terminology.
\begin{enumerate}
    \item Let $\mathbb{K} = \mathbb{F}_q$, $B = \{0\}$, and let $M$ be a multiplicative subgroup of $\mathbb{F}_q^\ast$. Then $H = \{ax : a \in M\}$, which is isomorphic as a group to $M$, via the group isomorphism $ax \mapsto a$. In particular, if we choose $\alpha = 1$ in Theorem \ref{th4}, we get  
    $$
    g(x) = \prod_{f \in H}(x - f(1)) = \prod_{a \in M}(x - a) = x^{|M|} - 1,
    $$
    which is a good polynomial and coincides with the polynomial $g(x)$ stated in Proposition 3.2 of \cite{Tamo}. Similarly, if we pick $\alpha = 0$, we get the good polynomial $g(x) = x^{|M|}$.

    \item Let $\mathbb{K}$ be the prime subfield of $\mathbb{F}_q$. Set $M = \{1\}$ and let $B$ be an additive subgroup of $\mathbb{F}_q$. Then we have the subgroup $H = \{x + b : b \in B\}$, which is isomorphic to $B$, via the group isomorphism $x + b \mapsto b$. In particular, if we choose $\alpha = 0$ in Theorem \ref{th4}, we get  
    $$
    g(x) = \prod_{f \in H}(x - f(0)) = \prod_{b \in B}(x - b),
    $$
    which is a good polynomial and coincides with the polynomial $g(x)$ in Proposition 3.2 of \cite{Tamo}.

    \item If both $M$ and $B$ are non-trivial multiplicative and additive subgroups, respectively, i.e., $M \neq \{1\}$ and $B \neq \{0\}$, then $H$ is neither isomorphic to $M$ nor to $B$. In this case, we have $H = \{ax + b \mid a \in M \text{ and } b \in B\}$. In particular, if we choose $\alpha = 1$ in Theorem \ref{th4}, we get  
    $$
    g(x) = \prod_{f \in H}(x - f(1)) = \prod_{a \in M}\prod_{b \in B}(x - a - b),
    $$
    which is a good polynomial and coincides with the polynomial $g(x)$ in Theorem 3.3 of \cite{Tamo}.
\end{enumerate}

 Apart from the good polynomials in \cite{Tamo}, there are other constructions of good polynomials known in the literature, for instance, those in \cite{LiuMC18,Micheli, LiuMT20, RuiKai}. We leave it as an open question whether these good polynomials could also be constructed using the above $\text{AGL}$ approach. Furthermore, it remains an open question to characterize all good polynomials given by this approach, and to understand whether it provides new constructions of good polynomials with new parameters.
\section{Distance Bounds}\label{sec:distance_bound}
 In this section, we establish lower bounds on the minimum distance of qLRCs constructed in Section~\ref{sec:constrcution}.  
We first establish a simple lower bound based on the degree of evaluation polynomials.

For the LRC given by Construction~\ref{constr}, the minimum distance satisfies $d \ge n- \max_{f \in \Span(S)} \deg(f)$.  The maximum degree of polynomials in $\Span(S)$ is upper bounded by $\max\{n-(r+1), \ell\}$, where $\ell$ is defined in \eqref{eq:largest_deg}. Plugging this, we get that 
$$d \ge \min\{r+1, n-\ell\}.$$ 
Then, by Corollary~\ref{cor1}, we get the following lower bound. 
\begin{cor}
The minimum distance $\delta$ of the qLRC constructed using Construction~\ref{constr}  satisfies 
\begin{equation} \label{eq:bound_degree}
    \delta  \ge \max\{r+1, n-\ell\}.
\end{equation}
    
\end{cor}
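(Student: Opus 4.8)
The plan is to recast the statement, via Corollary~\ref{cor1} and the description of $\mathcal{C}^{\perp}$ obtained inside the proof of Theorem~\ref{th1}, as a Hamming-weight lower bound for a concrete family of evaluation vectors, and then to bound that weight in two independent ways. Since $\mathscr{C}=\text{CSS}(\mathcal{C},\mathcal{C})$, Corollary~\ref{cor1} gives $\delta=\text{wt}(\mathcal{C}\setminus\mathcal{C}^{\perp})$, while the proof of Theorem~\ref{th1} identifies $\mathcal{C}^{\perp}$ with $\mathcal{D}=\{\ev_{A,\boldsymbol{u}}(f):f\in\Span(T)\}$, on which---as on all of $\Span(S)$---the map $\ev_{A,\boldsymbol{u}}$ is injective. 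Hence it suffices to show $\text{wt}(\ev_{A,\boldsymbol{u}}(f))\ge\max\{r+1,\,n-\ell\}$ for every $f\in\Span(S)\setminus\Span(T)$. I would first record the structural consequence of $f\notin\Span(T)$: writing $f=f_1+f_2$ with $f_1\in\Span(S_1)$ and $f_2\in\Span(S_2)$, the inclusions $S_2\subseteq T$ and $T_1\subseteq S_1$ force $f_1\notin\Span(T_1)$, so that $f_1\ne\boldsymbol{0}$ and $f_1$ uses at least one monomial $x^ig(x)^j\in S_1\setminus T_1$, of degree strictly between $\ell'$ and $\ell$ (with $\ell$ as in~\eqref{eq:largest_deg}).

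For the bound $\text{wt}(\ev_{A,\boldsymbol{u}}(f))\ge r+1$ I would reuse the locality mechanism from the proof of Theorem~\ref{th1}: on each block $A_t$ the subvector $\ev_{A,\boldsymbol{u}}(f)|_{A_t}$ is the column-scaled evaluation, on the $r+1$ points of $A_t$, of $\lambda_t(x):=f(x)\bmod\bigl(g(x)-g(A_t)\bigr)$, a polynomial of degree at most $r-1$; hence this subvector is either $\boldsymbol{0}$ or has weight at least $2$, and it is $\boldsymbol{0}$ exactly when $f_1\bmod\bigl(g-g(A_t)\bigr)=0$ and $f_2$ vanishes on $A_t$. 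Both conditions amount to the vanishing, at the value $g(A_t)$, of certain nonzero, bounded-degree univariate polynomials formed from the coefficients of $f_1$ and $f_2$ (using Lemma~\ref{lem:good_poly} to express $f_2$ in the basis $1,g,\dots,g^{\frac{n}{r+1}-1}$). Since there are $\tfrac{n}{r+1}$ distinct block values $g(A_t)$, bounding the number of these that can be common zeros by the degrees of those polynomials caps the number of zero blocks, and one then reads off a lower bound on the total weight. For the bound $\text{wt}(\ev_{A,\boldsymbol{u}}(f))\ge n-\ell$, the elementary estimate $\text{wt}(\ev_{A,\boldsymbol{u}}(f))\ge n-\deg f$ already delivers $n-\ell$ whenever every monomial occurring in $f$ has degree at most $\ell$, that is, when $f$ carries no high-degree $S_2$-term.

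The step I expect to be the main obstacle is closing the gap between these two partial arguments and the claimed maximum. The classical bound recalled before the statement only gives $\delta\ge d\ge\min\{r+1,n-\ell\}$ for free, so the entire content of the corollary is the improvement of $\min$ to $\max$, and this must come from the extra constraint that $f\in\Span(S)\setminus\Span(T)$ rather than merely $f\ne\boldsymbol{0}$. When $n-\ell>r+1$ the degree of $f_2$ may be as large as $n-(r+1)>\ell$, so the degree estimate alone yields only $\ge r+1$; one then has to show that such an $f$, whose $S_1$-component is nonzero and of degree exceeding $\ell'$, cannot simultaneously vanish on many blocks, by feeding that constraint into the blockwise count of the second paragraph. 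Dually, when $n-\ell<r+1$ one has to exclude low-weight codewords that lie in $\mathcal{C}$ but are ruled out only by $f\notin\Span(T)$. I would carry out this case analysis carefully, as it is the one place where the dual-containing structure does genuine work; the $r+1$ bound and the low-degree half of the $n-\ell$ bound are routine once the decomposition $f=f_1+f_2$ and the facts from Theorem~\ref{th1} are in hand.
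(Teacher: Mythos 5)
The displayed inequality in this corollary almost certainly carries a typo: the $\max$ should be a $\min$, and this changes the entire complexion of the problem you set yourself. The derivation in the paragraph just before the corollary produces $d \ge n - \max\{n-(r+1),\ell\} = \min\{r+1,\,n-\ell\}$ for the classical code $\mathcal{C}$, and then simply applies Corollary~\ref{cor1}: since $\mathcal{C}\setminus\mathcal{C}^\perp\subseteq\mathcal{C}\setminus\{\boldsymbol{0}\}$, one has $\delta=\text{wt}(\mathcal{C}\setminus\mathcal{C}^\perp)\ge\text{wt}(\mathcal{C}\setminus\{\boldsymbol{0}\})=d\ge\min\{r+1,n-\ell\}$, and that single line is the whole intended proof. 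The numerical example at the end of Section~\ref{sec:constrcution} confirms this reading: there $n=32$, $r+1=4$, $\ell=21$, so $\min\{r+1,n-\ell\}=4$ while $\max\{r+1,n-\ell\}=11$; the example cites \eqref{eq:bound_degree} to conclude $\delta\ge 4$ and then calls Theorem~\ref{main-prop}'s bound $\delta\ge 5$ ``better,'' which would be incoherent if \eqref{eq:bound_degree} already certified $\delta\ge 11$. Moreover, the $\max$ claim is in fact \emph{false}: take $q=13$, $A=\mathbb{F}_{13}^\ast$, $g(x)=x^4$ (cosets of the order-$4$ multiplicative subgroup), $r=3$, $n=12$, $k=9$, so $T_1=\emptyset$ and $n-\ell=2$. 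Then $f(x)=3(x^4-1)(x^4-3)(x-4)(x-7)$ lies in $\Span(S)\setminus\Span(T)$ and vanishes at every point of $A$ except $6$ and $9$, producing a weight-$2$ element of $\mathcal{C}\setminus\mathcal{C}^\perp$; hence $\delta\le 2=\min\{r+1,n-\ell\}<r+1=\max\{r+1,n-\ell\}$.

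You in fact noticed the tension yourself---``the classical bound only gives $\min$, so the entire content of the corollary is the improvement of $\min$ to $\max$''---but treated it as a gap to be closed rather than as a red flag, and the route you sketch for closing it cannot succeed. In the regime $n-\ell>r+1$ the $S_2$-component of $f$ may have degree up to $n-(r+1)>\ell$, and nothing in the constraint $f\notin\Span(T)$ caps the $S_2$-degree, so the degree estimate cannot be upgraded to $n-\ell$; and in the regime $n-\ell<r+1$ your ``each nonzero block contributes weight at least $2$'' mechanism gives only $2$ times the number of nonzero blocks, which need not reach $r+1$ when $k$ is near $nr/(r+1)$---exactly the situation exhibited by the counterexample above. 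Once the target is read with $\min$, none of this blockwise machinery is needed: the degree argument you already recall at the start of your second paragraph, together with $\delta\ge d$, closes the proof.
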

 In the remainder of this section, we establish a new lower bound on the minimum distance of qLRCs constructed by good polynomials arising from subgroups of AGL, as described in Section \ref{sec:AGL}. This lower bound improves upon the bound in \eqref{eq:bound_degree} for most parameters, though not for all. In \cite{Golowich}, a lower bound on the minimum distance is derived for their qLRC, under the restriction that $r+1$ is prime. Our proof generalizes their result in two significant ways. 

First, we remove the restriction that $r+1$ must be prime, allowing it to hold for any positive integer. Additionally, our result applies to qLRCs constructed by any subgroup of AGL, whereas their construction is limited to a very specific family of subgroups of AGL, although they did not explicitly use this terminology.

To obtain our result, we combine the proof technique from \cite{Golowich} with some results from graph theory, particularly the Expander Mixing Lemma.

We begin by introducing the necessary graph-theoretic tools.

\begin{defi}[Schreier graph]
Let $G$ be a finite group that acts on a set $X$, and let $S$ be a symmetric subset of $G$, i.e., $S=S^{-1}$ such that for all distinct $s, s' \in S$ and $x \in X$,  $sx \neq s'x$. Then, the Schreier graph $Sch(X, S)$ is defined to be the graph with vertex set $X$ and edge set $\{ \{sx, x\} : x \in X, s \in S \}$.
\end{defi}
\noindent
Note that the above definition of a Schreier graph allows self-loops; however, it does not permit multiple edges, i.e., this is not a multigraph.
Furthermore,  a Schreier graph is a generalization of the more familiar family of graphs known as Cayley graphs, where $X = G$, i.e., the group acts on itself using the obvious action by left multiplication.

%For a Cayley graph, the degree $d$ is equal to the number of elements in the generating set $S.$ 
%\begin{defi} 
%    The adjacency matrix $A$ of a graph $\mathcal{G}$ with $n$ vertices is defined as $A = [A[i][j]],$ where 
%$$
%A[i][j] =
%\begin{cases}
%1 & \text{if } (i, j) \in E, \\
%0 & \text{otherwise.}
%\end{cases}
%$$
%Here, $E$ is the set of edges of the graph. For undirected graphs, $A[i][j] = A[j][i].$
%\end{defi}

Recall that a $d$-regular graph is a graph in which every vertex is connected to exactly $d$ other vertices. We will need the following lemma, known as the Expander Mixing Lemma.
\begin{prop}[Expander Mixing Lemma \cite{Alon}]\label{Expander} 
Let $\mathcal{G} = (V, E)$ be a $d$-regular graph with $n$ vertices, and let $\lambda$ be the second largest eigenvalue in the absolute value of $\mathcal{G}$. For $S, T \subseteq V$, let $e(S, T)$ denote the number of edges between $S$ and $T$. Then
$$
\left| e(S, T) - \frac{d|S||T|}{n} \right| \leq \lambda \sqrt{|S||T| \left( 1 - \frac{|S|}{n} \right) \left( 1 - \frac{|T|}{n} \right)}.
$$
\end{prop}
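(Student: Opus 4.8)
\emph{Proof proposal.} This is the classical spectral argument, which I would carry out as follows. Let $A \in \{0,1\}^{V \times V}$ be the adjacency matrix of $\mathcal{G}$, with the convention that a self-loop at $v$ contributes $1$ to $A_{vv}$; then $A$ is symmetric and, by $d$-regularity, the all-ones vector $\boldsymbol{1}$ is an eigenvector with eigenvalue $d$. By the spectral theorem there is an orthonormal eigenbasis $v_1 = \boldsymbol{1}/\sqrt{n}, v_2, \ldots, v_n$ with real eigenvalues $d = \mu_1 \ge \mu_2 \ge \cdots \ge \mu_n$, and by hypothesis $\max_{i \ge 2} |\mu_i| = \lambda$. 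For $S, T \subseteq V$ with indicator vectors $\boldsymbol{1}_S, \boldsymbol{1}_T$, the number of edges between $S$ and $T$ satisfies $e(S,T) = \boldsymbol{1}_S^{\top} A\, \boldsymbol{1}_T$ (with the edge-counting convention matching the self-loop convention above).

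The next step is to split off the $\boldsymbol{1}$-component. Write $\boldsymbol{1}_S = \tfrac{|S|}{n}\boldsymbol{1} + f$ and $\boldsymbol{1}_T = \tfrac{|T|}{n}\boldsymbol{1} + g$, where $f, g$ are the orthogonal projections onto $\Span(v_2,\ldots,v_n)$. Using $A\boldsymbol{1} = d\boldsymbol{1}$, the orthogonality $\boldsymbol{1}^{\top} f = \boldsymbol{1}^{\top} g = 0$, and the fact that $A$ preserves $\boldsymbol{1}^{\perp}$, expanding the bilinear form gives
\[
\boldsymbol{1}_S^{\top} A\, \boldsymbol{1}_T = \frac{d|S||T|}{n} + f^{\top} A g .
\]
On $\boldsymbol{1}^{\perp}$ every eigenvalue of $A$ has absolute value at most $\lambda$, so $|f^{\top} A g| \le \|f\|\,\|A g\| \le \lambda \|f\|\,\|g\|$. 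By Parseval, $\|f\|^2 = \|\boldsymbol{1}_S\|^2 - \tfrac{|S|^2}{n} = |S|\big(1 - \tfrac{|S|}{n}\big)$ and similarly $\|g\|^2 = |T|\big(1 - \tfrac{|T|}{n}\big)$. Substituting yields exactly
\[
\left| e(S,T) - \frac{d|S||T|}{n} \right| \le \lambda \sqrt{|S||T|\left(1 - \frac{|S|}{n}\right)\left(1 - \frac{|T|}{n}\right)} ,
\]
as claimed.

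The only delicate point — and the step I would be most careful about — is the identity $e(S,T) = \boldsymbol{1}_S^{\top} A\, \boldsymbol{1}_T$: since the Schreier graph here may have self-loops and $S, T$ need not be disjoint, one must fix the counting convention (a self-loop at $v$ counts once when $v \in S \cap T$, an edge inside $S \cap T$ counts once) so that the quadratic form literally equals $e(S,T)$ on the nose. Once that convention is pinned down, everything above is routine linear algebra, and the symmetry of $A$ (hence diagonalizability with real eigenvalues and an orthonormal eigenbasis) is exactly what makes the projection step legitimate.
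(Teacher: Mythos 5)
Your proof is correct: this is the standard spectral-decomposition argument for the Expander Mixing Lemma, and every step (the projection onto $\boldsymbol{1}^{\perp}$, the identity $\boldsymbol{1}_S^{\top}A\boldsymbol{1}_T = \tfrac{d|S||T|}{n} + f^{\top}Ag$, the Cauchy--Schwarz bound $|f^{\top}Ag|\le \lambda\|f\|\|g\|$, and the norm computation $\|f\|^2 = |S|(1-|S|/n)$) is carried out accurately. There is no paper proof to compare against --- the paper states this as a proposition cited from the literature and does not prove it.

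One small remark on the point you flag as delicate: the self-loop convention is indeed the only place where care is needed in general, and you handle it correctly by fixing $A_{vv}=1$ for a loop and matching the count of $e(S,T)$ accordingly. In the specific application later in the paper, however, this subtlety is moot: the Schreier graph $Sch(A_i, H\setminus\Theta)$ has no self-loops, because $H$ acts freely on $A_i$ and the identity element of $H$ lies in $\Theta$, hence never in $H\setminus\Theta$. The paper's use of $e(S_i,S_i)$ as ``twice the number of induced edges'' (i.e., ordered pairs) is then exactly $\boldsymbol{1}_{S_i}^{\top}A\boldsymbol{1}_{S_i}$ under your convention.
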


 The  following theorem is the main result of this section, and it provides a lower bound on the minimum distance  of the qLRC constructed using a subgroup of AGL.
 %quantum code $\text{CSS}(\mathcal{C},\mathcal{C})$.  
\begin{theorem}
\label{main-prop}
 Let   $\mathcal{C}$ be the $[n,k]_q$ code with locality $r$ obtained by Construction~\ref{constr}, constructed using a good polynomial that arise from a subgroup $H$   of $\text{AGL}$ as is Lemma \ref{lem:mult_add}. Then,
  the corresponding qLRC $\text{CSS}(\mathcal{C}, \mathcal{C})$ has minimum distance
  \begin{equation}\label{eq:distance}\delta \ge 
    n\left(1-\frac{1}{2p}-\sqrt{\frac{1}{4p^2}+\frac{p-1}{p}\cdot\frac{\ell-1}{n}}\right),
    \end{equation}
    where $p$ is the smallest prime factor of $r+1.$   
    \end{theorem}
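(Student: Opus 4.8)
The plan is to bound the minimum distance $\delta = \mathrm{wt}(\mathcal{C}\setminus\mathcal{C}^\perp)$ from below by showing that any nonzero codeword $\boldsymbol{c} = \ev(f(x))$ with $f \in \Span(S)$ but $\ev(f)\notin\mathcal{C}^\perp$ cannot have too large a support. Write $f(x) = \sum_i x^i g(x)^{j_i}$ following the structure of $S = S_1\sqcup S_2$. The key observation, exactly as in \cite{Golowich}, is to analyze $f$ locally on each orbit $A_i$: on the block $A_i$, the polynomial $\lambda_i(x) := f(x)\bmod (g(x)-g(A_i))$ has degree at most $r-1$ and agrees with $f$ on all of $A_i$, so the number of zeros of $\boldsymbol{c}$ inside $A_i$ is either $r+1$ (if $\lambda_i\equiv 0$, a "fully erased" block) or at most $r-1$. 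The genuinely new ingredient, needed because we do not assume $r+1$ prime, is to control how many blocks can be fully erased; here we use that $H$ has a subgroup of prime order $p$ (where $p$ is the smallest prime factor of $r+1$), which acts on each $A_i$ and partitions it into $(r+1)/p$ sub-blocks of size $p$. The claim to establish is that on a block where $\lambda_i\not\equiv 0$, the restriction $\boldsymbol{c}|_{A_i}$, when viewed through the orbit structure of the order-$p$ subgroup, cannot vanish on too many of these size-$p$ sub-blocks — this is where the $\ell-1$ (the degree budget $\ell$ of $S_1$) enters, since a low-degree polynomial constant-on-cosets argument bounds the number of size-$p$ groups it can kill.

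Next I would set up the expander/Schreier graph. Define a graph $\mathcal{G}$ whose vertices are the $n/p$ sub-blocks of size $p$ obtained by refining the orbit partition $\{A_i\}$ via the order-$p$ subgroup of $H$, or alternatively a Schreier graph $Sch(A, S)$ for a suitable symmetric generating set $S\subseteq H$, so that the two natural partitions of $A$ — into the size-$(r+1)$ orbits $A_i$ and into the size-$p$ sub-orbits — appear as the two "sides" of an edge-counting argument. One orients edges to connect a size-$p$ sub-block to its ambient size-$(r+1)$ orbit. Let $Z\subseteq A$ be the zero set of $\boldsymbol{c}$ and let $\delta = |A| - |Z| = n - |Z|$ be the weight. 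The strategy is: (i) a block $A_i$ is "good" (not fully erased) unless $|A_i\cap Z| = r+1$; (ii) on a good block, the number of fully-zero size-$p$ sub-blocks is at most something like $\ell-1$ summed appropriately; (iii) translate (i)–(ii) into a statement that $Z$ and the set of fully-erased sub-blocks interact with $\mathcal{G}$ in a way the Expander Mixing Lemma (Proposition~\ref{Expander}) forbids unless $|Z|$ is small. Concretely, one lets $S$ = set of size-$p$ sub-blocks entirely contained in $Z$ and $T$ = set of size-$(r+1)$ orbits, counts $e(S,T)$ two ways, and plugs the eigenvalue bound $\lambda$ of the relevant graph (which will be $\sqrt{(p-1)\cdot(\text{something})}$ or computed from the fact that the sub-block-to-orbit incidence graph is a disjoint union of complete bipartite-like pieces) into the inequality. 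Rearranging the resulting quadratic inequality in $|Z|/n$ yields exactly the bound $\delta \ge n(1 - \tfrac{1}{2p} - \sqrt{\tfrac{1}{4p^2} + \tfrac{p-1}{p}\cdot\tfrac{\ell-1}{n}})$ — the $\tfrac{1}{2p}$ and $\tfrac{p-1}{p}$ coefficients are the tell-tale signature of solving $x^2 \approx \tfrac{1}{p}x + \tfrac{p-1}{p}\cdot\tfrac{\ell-1}{n}$ for $x = 1 - \delta/n$.

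I would organize the write-up as: first reduce to the block-local analysis and prove the "$\le r-1$ zeros on a non-fully-erased block, and $\le(\ell-1)$-type bound on fully-erased sub-blocks" lemma using Lemma~\ref{lem:good_poly} and the degree bound $\ell$ from \eqref{eq:largest_deg}; then identify the correct graph on sub-blocks/orbits, compute (or bound) its second eigenvalue $\lambda$ in terms of $p$; then apply Proposition~\ref{Expander} with the sets $S,T$ above; finally solve the quadratic.

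\textbf{Main obstacle.} The hard part will be step (ii): correctly formulating and proving the bound on how many size-$p$ sub-blocks a codeword can fully annihilate without being the zero codeword on that orbit, and showing this is governed by $\ell$ (hence the $\ell-1$ in the bound). In \cite{Golowich} the primality of $r+1$ makes this immediate — a nonzero polynomial of degree $<r+1 = p$ over the orbit simply cannot vanish on a full sub-block without vanishing entirely — whereas here, with $r+1$ composite, one must pass to the order-$p$ subgroup and argue that the "reduced" polynomial $\lambda_i$, re-expanded in powers of the order-$p$ sub-block good polynomial, has controlled degree, so that a counting/Vandermonde argument on the $(r+1)/p$ sub-blocks applies. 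Getting the eigenvalue $\lambda$ of the incidence graph to come out to precisely the value that produces $\sqrt{(p-1)/p}$ (rather than a weaker constant) is the second delicate point; I expect this forces a careful choice of the symmetric set $S$ in the Schreier graph so that the graph decomposes into copies of $K_{p}$ minus a perfect matching, or similar, whose nontrivial eigenvalues are exactly $\pm 1$ (normalized), feeding the factor $(p-1)$ into the mixing bound.
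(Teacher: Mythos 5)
There is a genuine gap: your proposal misses the central mechanism that makes the $\ell-1$ term and the Expander Mixing Lemma talk to each other. The paper does not count fully-erased blocks or sub-blocks. Instead, after writing $f=\gamma+m$ with $\gamma\in\Span(S_1)$ nonzero (crucial, since $\ev(f)\notin\mathcal{C}^\perp$ forces $\gamma\neq 0$) and $m\in\Span(S_2)$ constant on orbits, it constructs an auxiliary polynomial $G(x)=\prod_{t\in H\setminus\Theta}\frac{\gamma(t(x))-\gamma(x)}{t(x)-x}$, where $\Theta=\{t\in H:\gamma\circ t=\gamma\}$ is a \emph{data-dependent} proper subgroup of $H$. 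This $G$ is nonzero of degree at most $|H\setminus\Theta|(\ell-1)$, and a pair of zeros $\alpha,\tilde t(\alpha)$ of $f$ inside the same orbit with $\tilde t\in H\setminus\Theta$ forces $G(\alpha)=0$. Counting such ordered pairs via the Schreier graph $Sch(A_i,H\setminus\Theta)$ on the $r+1$ vertices of each orbit (a complete multipartite graph with $|\Theta|$-sized parts, whose second eigenvalue is exactly $|\Theta|$, so $\mu+\lambda=r+1$) and applying the Expander Mixing Lemma gives a lower bound on the number of roots of $G$, hence a quadratic inequality in $\text{wt}(\mathtt f)$. Your proposal has no analogue of $G$, so the degree budget $\ell$ never gets translated into a usable inequality.

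Two more specific problems. First, your $p$ enters the picture in the wrong way: you want to refine each orbit by a fixed order-$p$ subgroup, but the relevant subgroup is $\Theta$, which depends on the codeword and is \emph{not} controllable to be of order $p$ — it can be any proper subgroup of $H$. The parameter $p$ appears only at the very end, as the worst-case bound $|\Theta|\le (r+1)/p$ when plugging into the quadratic solution; if you instead tried to run the argument with a fixed size-$p$ subgroup $P$, the product defining $G_P(x)$ would have identically-zero factors whenever $\Theta\supsetneq P$, and the whole argument collapses. Second, the bipartite sub-block-to-orbit incidence graph you sketch for the mixing lemma is not an expander at all: each sub-block belongs to exactly one orbit, so the graph is a disjoint union of stars and its spectral gap carries no information. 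The correct graph lives entirely \emph{inside} a single orbit $A_i$, with generating set $H\setminus\Theta$, and the two sets $S,T$ in the mixing lemma are both taken to be $S_i=\{\alpha\in A_i: f(\alpha)=0\}$, so the argument is really about $e(S_i,S_i)$, not a cross-count between two different partitions.
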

\begin{proof}
%Let $\mathcal{C}$ be an $(n,k,r)$ classical LRC over $\mathbb{F}_q$ and let $\mathscr{C}=\text{CSS}(\mathcal{C},\mathcal{C})$ be the corresponding qLRC of length $n$ and locality $r$ over $\mathbb{F}_q.$ 
Let $\mathtt{f}=\ev(f(x))\in \mathcal{C}\setminus \mathcal{C}^\perp$ be an arbitrary nonzero codeword. We aim to show that $\text{wt}(\mathtt{f})\geq \delta.$

Since $\mathcal{C}=\{\ev(f(x)): f(x)\in\Span(S)\},$ where $S=S_1\cup S_2$ such that $S_1$ and $S_2$ are disjoint, 
we can express $f(x)$ as
$$
f(x)=\gamma(x)+m(x),
$$
where $\gamma(x)\in\Span(S_1)$ and $m(x)\in\Span(S_2).$
Note that necessarily, $\gamma(x) \neq 0$, as otherwise $\ev(f) = \ev(m) \in \mathcal{C}^\perp$ since $S_2 \subseteq T$ and $\mathcal{C}^\perp = \{\ev(b): b(x) \in \Span(T)\}$. Furthermore, note that $m(x)$ is constant on the orbits of $H$.  

Define
\begin{equation}\label{eq:polynomial_G}
G(x)= \prod\limits_{t\in H\setminus\Theta} \frac{\gamma(t(x))-\gamma(x)}{t(x)-x},
\end{equation}
where
\begin{equation*}\label{eq:Theta}
    \Theta = \{ t(x) \in H \mid \gamma(t(x)) = \gamma(x) \}.
\end{equation*}
We will need the following lemma on the set $\Theta$. 
\begin{lemma}\label{lem4} 
$\Theta$ is a proper subgroup of $H$, and therefore, 
$H\setminus \Theta$ is closed under inverses, i.e., symmetric.
%Let $h(x) \in \mathbb{F}_q[x]$ be a polynomial and $H$ be a subgroup of $\text{AGL}$ then the set 
%\begin{equation}\label{Theta}
%    \Theta = \{ \ell(x) \in H \mid h(\ell(x)) = h(x) \},
%\end{equation}
%forms a subgroup of $H$.
\end{lemma}
By Lemma \ref{lem4}, since  $\Theta$ is a proper subgroup of $H$, $G(x)$ is well-defined.
%and is, in fact, a nonzero polynomial. 
Moreover, since $\gamma(t(x)) - \gamma(x)$ is divisible by $t(x)-x$, we conclude that $G(x)$ is a nonzero polynomial whose degree satisfies
$$
\deg(G(x)) = |H\setminus\Theta|(\deg(\gamma(x))-1) \leq \mu(\ell-1),
$$
where $\mu = |H\setminus\Theta|$.

Next, we bound the number of roots of $G(x)$ in terms of the number of roots of $f(x)$. 
If, for some evaluation point $\alpha \in A$, both $f(\alpha) = 0$ and $f(\tilde{t}(\alpha)) = 0$ for some $\tilde{t} \in H \setminus \Theta$, then
\begin{align*}\label{sta}
    \gamma(\tilde{t}(\alpha)) - \gamma(\alpha) \notag &= -m(\tilde{t}(\alpha)) + m(\alpha)\\
    &= -m(\alpha) + m(\alpha) = 0,
\end{align*}
where the second equality follows since $m(x)$ is constant on the orbits of $H$. 
Therefore, we conclude that $G(\alpha) = 0$. Similarly, $G(x)$ has a root also at $\tilde{t}(\alpha)$ from the product in \eqref{eq:polynomial_G} with $\tilde{t}=t^{-1}$. Hence, any pair of such roots contributes exactly two roots to $G(x)$.

For the orbit $A_i, i = 1,2, \ldots, \frac{n}{r+1}$, we would like to count the number of ordered pairs $(\alpha, \beta)$, where $\alpha, \beta \in A_i$ such that $f(\alpha) = f(\beta) = 0$ and $t(\alpha) = \beta$ for some $t \in H \setminus \Theta$. For this, we consider the Schreier graph $Sch(A_i, H \setminus \Theta)$ defined on the orbit $A_i$ of $H$ and the set  $H \setminus \Theta$. Note that it is indeed a Schreier graph since 
(i) $t(\alpha) \neq \tilde{t}(\alpha)$ and  for any distinct $t, \tilde{t} \in H\setminus \Theta$ and $\alpha \in A_i$, as $H$ acts regularly on the orbit $A_i$ and (ii) the set $H\setminus \Theta$ is symmetric.
%This follows from the fact that $\Theta$ is a subgroup of $H$, and therefore, is symmetric, and so is $H\setminus \Theta$.

The largest eigenvalue of the graph equals $\mu = |H \setminus \Theta|$, and let $\lambda$ be the second largest eigenvalue in absolute value. Finally, let $S_i = \{\alpha \in A_i : f(\alpha) = 0\}$ be the set of zeros of $f$ in $A_i$. 

By Proposition \ref{Expander}, we have
$$
\left| e(S_i, S_i) - \frac{\mu |S_i|^2}{r+1} \right| \leq \lambda |S_i| \left( 1 - \frac{|S_i|}{r+1} \right),
$$
where $e(S_i, S_i)$ denotes twice the number of edges in the induced subgraph on the set $S_i$. Note that each edge is counted twice, as needed, since we would like to  count  ordered pairs and not only edges.
From this, we obtain
\begin{equation}\label{eq:the-lower-bound}   
e(S_i, S_i) \geq \frac{\mu |S_i|^2}{r+1} - \lambda |S_i| \left( 1 - \frac{|S_i|}{r+1} \right) 
= \frac{\mu + \lambda}{r+1} |S_i|^2 - \lambda |S_i|.
\end{equation}
Thus, $G(x)$ has $e(S_i, S_i)$ roots within a given orbit $A_i$ and the total number of roots of $G(x)$ is at least the summation of $e(S_i, S_i)$ over all the orbits $A_i$.

We will need the following lemma. 
\begin{lemma}
\label{good-lemma}
   $\lambda$ the second largest absolute value of the eigenvalues of  $Sch(A_i,H\setminus \Theta)$ equals $|\Theta|$ and therefore, $\mu+\lambda=|H|=r+1$.
\end{lemma}

Note that $|S_i| = (r+1 - \text{wt}(\mathtt{f}_i))$, where $\text{wt}(\mathtt{f}_i)$ denotes the Hamming weight of the restriction of $\mathtt{f}$ to $A_i$. Therefore, by Lemma \ref{good-lemma} and \eqref{eq:the-lower-bound}, the total number of roots of $G(x)$ is at least
\begin{align*}
     \sum_{i=1}^{\frac{n}{r+1}} |S_i|^2 - |\Theta| \sum_{i=1}^{\frac{n}{r+1}} |S_i|
    &=  \sum_{i=1}^{\frac{n}{r+1}} (r+1 - \text{wt}(\mathtt{f}_i))^2 - |\Theta| \sum_{i=1}^{\frac{n}{r+1}} (r+1 - \text{wt}(\mathtt{f}_i)) \\
    &\geq \frac{r+1}{n}(n - \text{wt}(\mathtt{f}))^2 - |\Theta| (n - \text{wt}(\mathtt{f})) \\
    &= \frac{r+1}{n} \text{wt}(\mathtt{f})^2 - (2(r+1)-|\Theta|) \text{wt}(\mathtt{f}) + n \mu.
\end{align*}
Since $G(x)$ is a nonzero polynomial of degree at most $ \mu(\ell-1)$, we must have
$$
\frac{r+1}{n} \text{wt}(\mathtt{f})^2 - (2(r+1)-|\Theta|) \text{wt}(\mathtt{f}) + \mu(n - \ell + 1) \leq 0.
$$
Solving this quadratic inequality and rearranging terms gives that 
\begin{equation}\label{eq:last-eq}
    \text{wt}(\mathtt{f}) \ge n\left(1-\frac{|\Theta|}{2(r+1)}-\sqrt{\frac{|\Theta|^2}{4(r+1)^2}+\frac{r+1-|\Theta|}{r+1}\cdot\frac{\ell-1}{n}}\right).
    \end{equation} Note that the right hand side of \eqref{eq:last-eq} is a decreasing function of $|\Theta|$, and recalling that $\Theta$ is a proper subgroup of $H$, and therefore its size is at most $(r+1)/p,$ where $p$ is the smallest prime factor of $r+1.$ Plugging this in \eqref{eq:last-eq} gives the desired bound in  \eqref{eq:distance}.
 \end{proof}
We are left to prove Lemmas \ref{lem4} and \ref{good-lemma}. 

\vspace{0.2cm}
\noindent{\bf Proof of Lemma \ref{lem4}:}
Since $H$ is finite, in order to show that $\Theta$ is a subgroup it is sufficient to show that $\Theta \neq \emptyset$ and that it is closed under composition. Indeed, the identity map $e(x) = x$ is in $\Theta$ as $\gamma(e(x)) = \gamma(x)$. 

Let $f_1, f_2 \in \Theta$. Since $\gamma(f_1(x)) = \gamma(x)$ and $\gamma(f_2(x)) = \gamma(x)$, we have
\[
\gamma((f_1 \circ f_2)(x)) = \gamma(f_1(f_2(x))) = \gamma(f_2(x)) = \gamma(x),
\]
which shows that $f_1 \circ f_2 \in \Theta$.

Next, since $\Theta$ is a subgroup, it is symmetric, and therefore, also $H\setminus \Theta$ is symmetric. 
Lastly, $\Theta$ is a proper subgroup, as otherwise, if $H= \Theta$, this would imply that $\gamma(x)$ is constant on the orbits of $H$, and therefore $\gamma(x)\in\Span(S_2)$. However, since $\gamma(x)\in\Span(S_1)$ as well, we would have $\gamma(x)\in\Span(S_2)\cap\Span(S_1)=\{0\}$, which is a contradiction.

\vspace{0.2cm}
 {\bf Proof of Lemma \ref{good-lemma}:}
Let $\mathcal{B}$ be the adjacency matrix of the graph $Sch(A_i, H \setminus \Theta)$, where the rows and columns are indexed by the elements of the orbit $A_i$. Similarly, let $\mathcal{B}'$ be the adjacency matrix of the Schreier graph $Sch(A_i, \Theta)$. Then, it is easy to see that 
$$
\mathcal{B} = J - \mathcal{B}',
$$ 
where $J$ is the square matrix of order $r+1$ with all entries equal to $1$. 

It is clear that the largest eigenvalue of $\mathcal{B}$ equals the degree, which is $r+1 - |\Theta|$, and the second largest eigenvalue in absolute value, denoted $\lambda$, satisfies 
$$
\lambda = \max_{\substack{\boldsymbol{v} \perp (1,1, \ldots, 1) \\ ||\boldsymbol{v}|| = 1}} |\boldsymbol{v}\mathcal{B}\boldsymbol{v}^t| = \max_{\substack{\boldsymbol{v} \perp (1,1, \ldots, 1) \\ ||\boldsymbol{v}|| = 1}} |\boldsymbol{v}\mathcal{B}'\boldsymbol{v}^t|.
$$ 

Since $H$ acts regularly on $A_i$, the subgroup $\Theta$ also acts regularly on each of its orbits in $A_i$. Specifically, $\Theta$ partitions $A_i$ into 
$$
\frac{|A_i|}{|\Theta|} = \frac{r+1}{|\Theta|}
$$
orbits, each of size $|\Theta|$. By indexing the rows and columns of $\mathcal{B}'$ according to these orbits, we observe that $\mathcal{B}'$ is a block diagonal matrix with $\frac{r+1}{|\Theta|}$ blocks, where each block is a square matrix of order $|\Theta|$ with all entries equal to $1$. Therefore, the largest eigenvalue of $\mathcal{B}'$ equals $|\Theta|$, or equivalently, 
$$
\max_{||\boldsymbol{v}||=1} |\boldsymbol{v}\mathcal{B}'\boldsymbol{v}^t| = |\Theta|.
$$ 

We will now show that, in fact, $\lambda = |\Theta|$. Indeed, let $\mathtt{e} \in \mathbb{C}$ be a primitive $\frac{r+1}{|\Theta|}$-th root of unity, and define the vector 
$$
\boldsymbol{v} = (\bold{1}_{|\Theta|}, \mathtt{e} \cdot \bold{1}_{|\Theta|}, \ldots, \mathtt{e}^{\frac{r+1}{|\Theta|}-1} \cdot \bold{1}_{|\Theta|}),
$$
where $\bold{1}_{|\Theta|}$ is the vector of all ones of length $|\Theta|$. One can verify that $\boldsymbol{v}$ is orthogonal to the all-ones vector, and 
$$
\frac{|\boldsymbol{v}\mathcal{B}'\boldsymbol{v}^t|}{||\boldsymbol{v}||^2} = |\Theta|.
$$ 
Thus, we conclude that $\lambda = |\Theta|$, and the result follows.
 
\vspace{0.2cm}
For the special case when $|H|=r+1$ is prime, we get the following corollary from Theorem \ref{main-prop}.
%$H$ is cyclic and $\Theta$ is a trivial subgroup of $H.$ In this scenario, we have $|H\setminus\Theta|=\mu=r$ and $\lambda=1.$ 
\begin{cor}\label{cor:prime_loc_bound}
Let $\mathcal{C}$ be the $[n, k]_q$ code with locality $r$ obtained by Construction~\ref{constr}, constructed using a good polynomial that arises from a subgroup $H$ of $\text{AGL}$ as in Lemma \ref{lem:mult_add}. Then, if $r + 1$ is prime, the corresponding qLRC $\text{CSS}(\mathcal{C}, \mathcal{C})$ has minimum distance \begin{equation}\label{eq:distance_prime}
    \delta \geq n\left(1 - \frac{1}{2(r+1)} - \sqrt{\frac{1}{4(r+1)^2} + \frac{r}{r+1} \cdot \frac{\ell-1}{n}}\right).
\end{equation}
\end{cor}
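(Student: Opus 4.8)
The plan is to derive Corollary~\ref{cor:prime_loc_bound} as an immediate specialization of Theorem~\ref{main-prop}. The only observation needed is that if $r+1$ is prime, then its smallest prime factor is $p = r+1$ itself; substituting $p = r+1$ into the bound \eqref{eq:distance} of Theorem~\ref{main-prop} produces exactly \eqref{eq:distance_prime}. So at the level of the already-proved theorem, the corollary requires nothing beyond this arithmetic substitution.

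For a self-contained argument I would instead revisit the closing steps of the proof of Theorem~\ref{main-prop}. By Lemma~\ref{lem4}, $\Theta$ is a \emph{proper} subgroup of $H$; since $|H| = r+1$ is prime, Lagrange's theorem forces $|\Theta| = 1$, and hence $\mu = |H \setminus \Theta| = r$. Feeding $|\Theta| = 1$ (equivalently $r+1-|\Theta| = r$) into the intermediate inequality \eqref{eq:last-eq} gives
$$
\text{wt}(\mathtt{f}) \ge n\left(1 - \frac{1}{2(r+1)} - \sqrt{\frac{1}{4(r+1)^2} + \frac{r}{r+1}\cdot\frac{\ell-1}{n}}\right),
$$
and since $\mathtt{f} = \ev(f(x)) \in \mathcal{C}\setminus\mathcal{C}^\perp$ was an arbitrary nonzero codeword, this is the desired lower bound on $\delta$. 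As a sanity check, this agrees with plugging $p = r+1$ into \eqref{eq:distance}, since in the general proof the worst-case size bound $|\Theta| \le (r+1)/p$ also equals $1$ in the prime case.

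I do not anticipate any genuine obstacle: the corollary is a pure specialization, and the only real content is the one-line justification that $|\Theta| = 1$ when $r+1$ is prime, which is Lagrange's theorem applied to the proper subgroup $\Theta$ furnished by Lemma~\ref{lem4}.
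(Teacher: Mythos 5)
Your proposal is correct and matches the paper's (implicit) derivation: the paper states the corollary as a direct consequence of Theorem~\ref{main-prop}, and the only observation needed is precisely that $p=r+1$ when $r+1$ is prime, equivalently that $|\Theta|=1$ by Lagrange's theorem applied to the proper subgroup $\Theta$. Your sanity check that both routes give the same bound is a nice touch but not something the paper spells out.
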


As mentioned earlier, in \cite{Golowich}, a qLRC construction with $n = q - 1$ and $(r + 1) \mid n$ is presented, along with a lower bound on the minimum distance of this construction for the case when $r + 1$ is prime. However, the bound given in \eqref{eq:distance_prime} is slightly better than that in \cite{Golowich}.

In Figure~\ref{fig:bound_plot}, we compare the distance bounds in \eqref{eq:bound_degree} and \eqref{eq:distance_prime} against the bound from \cite{Golowich} for an example parameter set. It can be seen that for fixed $n, r$, and different values of $\kappa$, the bound in \eqref{eq:distance_prime} is either one better than the bound in \cite{Golowich} or equal to it. This trend appears to hold for all possible parameters.

We also remark that, using graph-theoretic techniques similar to those used in this paper, one can obtain a lower bound on the minimum distance of the qLRC in \cite{Golowich} without any restriction on $r + 1$.

 \begin{figure}[ht!]
	\begin{center}
	\includegraphics[width=0.6\columnwidth]{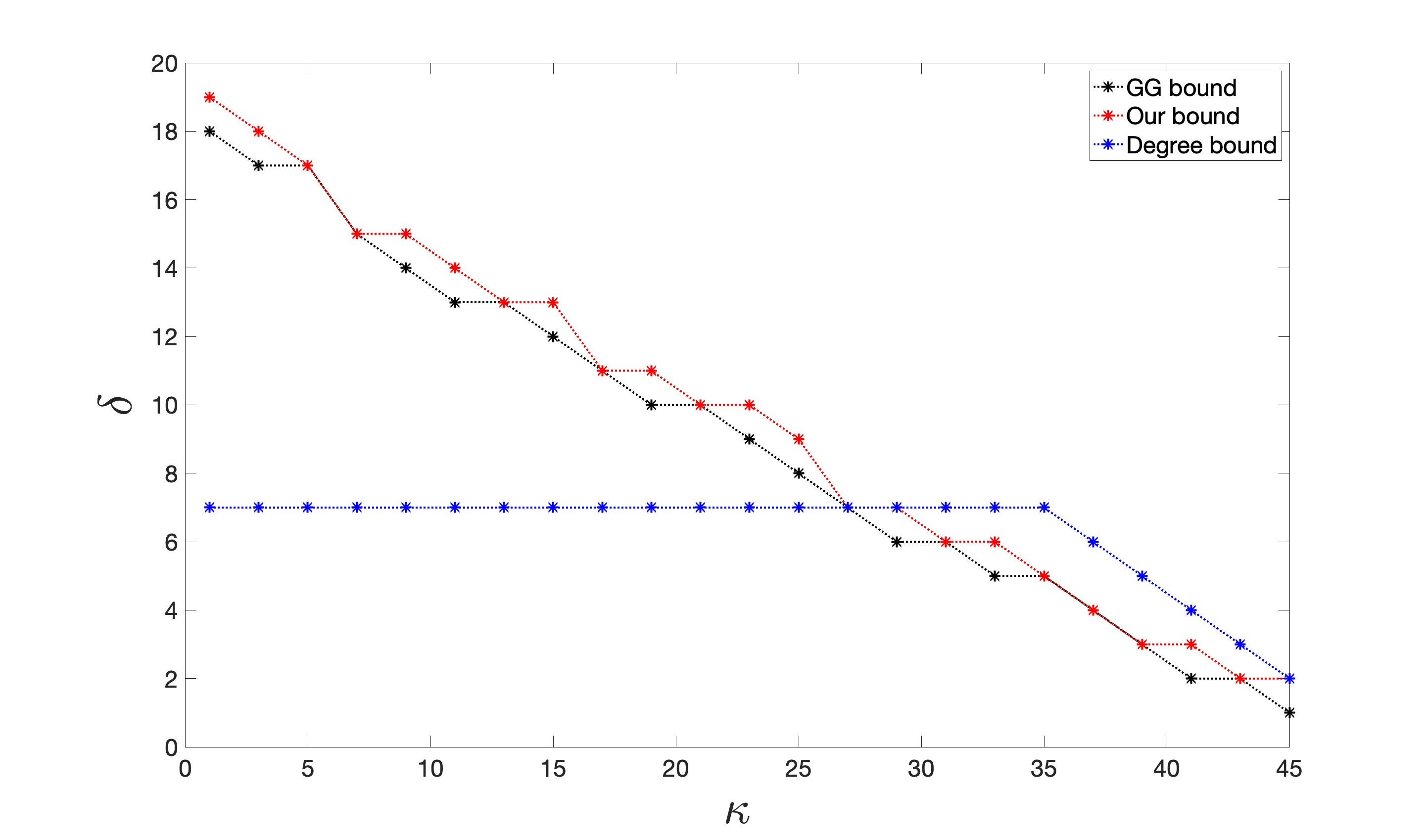} 
		\caption{Comparison between various bounds for $n=63, r=6, q=64$.
  Here GG bound is the bound in \cite{Golowich}, our bound is the bound in \eqref{eq:distance_prime}, and degree bound is the bound in \eqref{eq:bound_degree}.}
		\label{fig:bound_plot}
	\end{center} 
\end{figure}
The following result is another corollary that follows from Theorem \ref{main-prop} in a special case.
\begin{cor} \label{corr:additive_bound}
    Let $\mathcal{C}$ be an $[n, k]_q$ code with locality $r$ obtained by Construction \ref{constr} using a subgroup of $\text{AGL}$ that is isomorphic to an additive subgroup of $\mathbb{F}_q$, i.e., a subgroup of the form  
$\{x+b:b\in B\}$, where $B$ is an additive subgroup of $\mathbb{F}_q.$
  Then the corresponding qLRC $\text{CSS}(\mathcal{C}, \mathcal{C})$ has minimum distance

    \begin{equation*}\delta \ge 
    n\left(1-\frac{1}{2p}-\sqrt{\frac{1}{4p^2}+\frac{p-1}{p}\cdot\frac{\ell-1}{n}}\right),
    \end{equation*}
where $p$ is the characteristic of the field.       

\end{cor}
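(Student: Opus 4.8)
The plan is to obtain this corollary as an immediate specialization of Theorem~\ref{main-prop}, the only real content being the identification of the smallest prime factor of $r+1$ with the characteristic $p$. First I would check that a subgroup $H=\{x+b : b\in B\}$ with $B$ an additive subgroup of $\mathbb{F}_q$ is indeed of the form produced by Lemma~\ref{lem:mult_add}: taking $\mathbb{K}$ to be the prime subfield $\mathbb{F}_p$ of $\mathbb{F}_q$, $M=\{1\}$, and noting that any additive subgroup $B\subseteq\mathbb{F}_q$ is automatically an $\mathbb{F}_p$-subspace (closure under addition already forces closure under multiplication by elements of $\mathbb{F}_p$, which act by repeated addition), we see that $H$ is exactly the subgroup $\{ax+b : a\in M,\ b\in B\}$ of Lemma~\ref{lem:mult_add}, with $|H|=|M|\cdot|B|=|B|$. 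Since $\mathcal{C}$ has locality $r$, the orbits of $H$ used as the blocks of the partition have size $r+1$, so $r+1=|H|=|B|$. In particular the hypotheses of Theorem~\ref{main-prop} are met by this $H$.

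Next I would pin down the smallest prime factor of $r+1$. Because $B$ is a finite additive subgroup of a field of characteristic $p$, each nonzero element of $B$ has additive order $p$, so $B$ is an elementary abelian $p$-group; hence $|B|=p^m$ for some positive integer $m$, i.e., $r+1=p^m$. Consequently $p$ is the only, and therefore the smallest, prime divisor of $r+1$.

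Finally, I would apply Theorem~\ref{main-prop} to the code $\mathcal{C}$: it gives
$$\delta \ge n\left(1-\frac{1}{2p}-\sqrt{\frac{1}{4p^2}+\frac{p-1}{p}\cdot\frac{\ell-1}{n}}\right)$$
with $p$ the smallest prime factor of $r+1$, and substituting the identification from the previous step (namely that $p$ is the characteristic of $\mathbb{F}_q$) gives precisely the asserted bound. No genuine obstacle arises beyond matching the corollary's hypothesis to the framework of Lemma~\ref{lem:mult_add} and Theorem~\ref{main-prop}; the single point deserving a line of justification is the elementary fact that an additive subgroup of a characteristic-$p$ field has $p$-power order, which is what forces the smallest prime factor of $r+1$ to equal $p$.
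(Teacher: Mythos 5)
Your proof is correct and follows exactly the route the paper leaves implicit: viewing $\{x+b:b\in B\}$ as the $M=\{1\}$, $\mathbb{K}=\mathbb{F}_p$ case of Lemma~\ref{lem:mult_add}, observing that $B$ is elementary abelian of order $p^m$ so $r+1=p^m$ has smallest prime factor $p$, and then invoking Theorem~\ref{main-prop}. The one nontrivial observation --- that an additive subgroup of a characteristic-$p$ field is automatically an $\mathbb{F}_p$-subspace and hence has $p$-power order --- is correctly identified and justified.
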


%Since $\Theta$ is a proper subgroup of $H$, we have $p \leq |\Theta| \leq p^{y-1}.$ This implies that $p^y - p^{y-1} \leq |H \setminus \Theta| \leq p^y - p.$ Consequently, we see that $\mu=|H\setminus\Theta|$ satisfies $\phi(r + 1) \leq \mu \leq r + 1 - p.$ 
%From the eigenvalue bound, we see that $\lambda \leq \frac{r + 1}{p}.$
%Furthermore, we note that the degree of the polynomial $G(x)$ in \eqref{eq:deg_G} is at most $\phi(r + 1)(\ell - 1)$ in this case. Using arguments similar to that in the proof of Theorem~\ref{thm:lower_bound}, 
%we have the following lower bound.
%With these observations, substituting these values of $\mu$ and $\lambda$ into the distance bound in \eqref{eq:distance}, we have the following bound.   
 %\begin{equation*}\label{eq:additive_bound}
 %    \delta \geq n \left( 1 - \frac{(r+1)}{2(p\phi(r+1)+ r+1)} - \sqrt{\frac{(r+1)^2}{4(p\phi(r+1)+r+1)^2} + \frac{p(r+1-p)}{p\phi(r+1)+r+1} \cdot \frac{\ell-1}{n}} \right).
 %\end{equation*}
 %We illustrate this explicitly when $H$ is either a multiplicative or an additive subgroup of $\mathbb{F}_q.$
 
%{\color{red}Check the proof for correctness, add eigenvalue bound, corr for mult and additive cases: Sandeep}

%{\color{red} Eigenvalue proof: Zachi}

%{\color{red} Comparison to GG bound and a plot to show our gain: Vinayak}

 \section{Conclusions and future work} \label{sec:conclusion}
We presented a construction of qLRCs based on the CSS approach by modifying the good polynomial-based LRC construction in \cite{Tamo} to make it dual-containing. A new framework for constructing good polynomials was introduced, and we derived a lower bound on the minimum distance of the resulting qLRCs. Further exploration is needed to determine if improved lower bounds can be derived for our construction.

In \cite{Golowich}, given an additional folding parameter \( s \mid (q - 1) / r \), a qLRC with alphabet size \( q \) and block length \( q - 1 \) is folded to obtain a qLRC with alphabet size \( q^s \) and block length \( (q - 1) / s \). The folded code retains the local recoverability of the unfolded code. It is possible to obtain a folded version of our qLRCs in a similar manner. A lower bound on the minimum distance of the folded code is derived in \cite{Golowich}, and this bound approaches the quantum Singleton-like bound \eqref{eq:quantum_singleton} for large values of \( r \). A future research direction would be to explore the possibility of deriving distance lower bounds for the folded version of our qLRCs that improve upon the bound in \cite{Golowich}.   

We note that our qLRC construction meets the quantum Singleton-like bound \eqref{eq:quantum_singleton} with equality if \( 2 \leq \delta \leq r + 2 + \frac{n + \kappa}{2} - \left\lceil \frac{n + \kappa}{2r} \right\rceil (r + 1) \); we omit the details. Constructing qLRCs with optimal distance for general parameters remains an open question.

\appendix
 
%  \subsection{Proof of Lemma~\ref{lem:null_spsce}} \label{appendix:null_space}
%  Let $\boldsymbol{v} = (v_1, v_2, \ldots, v_n) \in \mathbb{F}_q^n$ be a vector in the null space of $V_A^\intercal.$ This means $V_A^\intercal\cdot\boldsymbol{v} = 0$, which gives the following homogeneous system of linear equations:
% $$\begin{aligned}
% v_1 + v_2 + \cdots + v_n &= 0 \\
% a_1 v_1 + a_2 v_2 + \cdots + a_n v_n &= 0 \\
% a_1^2 v_1 + a_2^2 v_2 + \cdots + a_n^2 v_n &= 0 \\
% &\vdots \\
% a_1^{n-2} v_1 + a_2^{n-2} v_2 + \cdots + a_n^{n-2} v_n &= 0
% \end{aligned}$$
% Since the matrix $V_A^\intercal$ has full rank $n-1$, the null space has dimension $1.$ Thus, the null space contains only scalar multiples of a single nonzero vector $\boldsymbol{v}.$ 

% Now we show that none of the entries in the vector $\boldsymbol{v}$ are zero. For this, let's assume, without loss of generality, that $v_1 = 0.$ The system then reduces to a Vandermonde matrix on the set $\{a_2, a_3, \ldots, a_n\}$ of $n-1$ entries, which is nonsingular. This implies that $v_2 = v_3 = \cdots = v_n = 0.$ However, this contradicts the fact that $\boldsymbol{v}$ is nonzero. Therefore, no entry of $\boldsymbol{v}$ can be zero.
\subsection{Proof of Lemma~\ref{lem:good_poly}} \label{appendix:good_poly}
%{\color{red} match the proof to lemma statement: Vinayak}
%The following lemma leads to Corollary~\ref{cor2},
%\begin{lemma}\label{lem2}
%Let $n$ and $r$ be positive integers such that $n=u(r+1)$ and let $A\subset\mathbb{F}_q$ be a set of size $n.$ Suppose $\mathcal{A}=\{A_1,A_2,\ldots,A_u\}$ is a partition of $A$ and let $h(x)$ be the annihilator polynomial of $A.$ Define $R\subset\mathbb{F}_q[x]$ as the space of polynomials that are constant on each subset  $A_i$ of the partition $\mathcal{A}.$ Then, $R$ is a vector space of dimension $u$ and it forms a ring under multiplication modulo $h(x).$
%\end{lemma}
%\begin{proof}
%Let $A = \{a_1, a_2, \ldots, a_n\} \subseteq \mathbb{F}_q$ be the set of size $n$, and let $\mathcal{A} = \{A_1, A_2, \ldots, A_u\}$ be a partition of $A$, where each subset $A_i$ has size $r+1$, \ie $|A_i| = r+1.$ 

%Let $u=\frac{n}{r+1}$. The annihilator polynomial $h(x)$ of $A$ is given by $h(x) = \prod_{a \in A} (x - a)$, which has degree $n.$ 
%Now, define $R$ as the set of polynomials that are constant on each subset $A_i$ in the partition $\mathcal{A}.$ 
%For $f(x)\in R$, we denote by $f(A_i)$ the constant value that $f(x)$ takes on the set $A_i$, $i\in\{1,2,\ldots,u\}.$ It is easy to see that $R$ is closed under addition and scalar multiplication, and therefore it is a subspace of $\mathbb{F}_q[x]$. 

For $i = 1,2, \ldots, \frac{n}{r+1}-1$, let $f_i(x)$ be the minimal degree polynomial that satisfies 
$$
f_i(x) = \begin{cases}
1 & \text{if } x \in A_i, \\
0 & \text{if } x \notin A_i.
\end{cases}
$$
By Lagrange interpolation, $\deg(f_i) < n$, and therefore $f_i \in R$. Clearly, these $f_i$'s are linearly independent because each one is nonzero only on a specific subset $A_i$. We will see that, in fact, they form a basis for $R$. Indeed, any polynomial $f \in R$ can be written as 
$$
f(x) = \sum_{i=1}^{\frac{n}{r+1}-1} f(A_i) f_i(x),
$$
and thus $R$ has dimension $u$.

Next, let $w_1(x), w_2(x) \in R$. Note that both $w_1(x)$ and $w_2(x)$ are constant on each subset $A_i$, and $h(x)$ evaluates to zero on $A$. It follows that $w_1(x) w_2(x) \pmod{h(x)}$ is a polynomial that is constant on each subset $A_i$ and has degree less than $n$. Thus, $R$ is closed under multiplication modulo $h(x)$, which implies that $R$ is a ring under multiplication modulo $h(x)$.

Since $g(x)$ is a good polynomial for the set $A$, it is constant on each subset $A_i$  and has degree $r+1$, so $g(x) \in R$. Thus, the $\frac{n}{r+1}-1$ polynomials $1, g(x), g(x)^2, \ldots, g(x)^{\frac{n}{r+1}-1}$ are constant on each $A_i$ and have degree less than $n$, and therefore are members of $R$. Moreover, they are linearly independent because they have distinct degrees, which implies that they form a basis for $R$.

\subsection{Proof of Lemma~\ref{calc}}
\label{appendix-calc}
It is enough to consider the case where both $k - u$ and $n - k - u$ are not divisible by $r - 1$. Otherwise, if, for example, $k - u$ is divisible by $r - 1$, we still have 
$$
\ell = (r + 1) \left( \frac{k - u}{r - 1} \right) - 2 \leq (r + 1) \left\lfloor \frac{k - u}{r - 1} \right\rfloor + (k - u) \bmod (r - 1).
$$
A similar argument applies when $n - k - u$ is divisible by $r - 1$.

Thus, we assume that both $k - u$ and $n - k - u$ are not divisible by $r - 1$.
$$
\ell' + \ell = (n - k - u) \bmod (r - 1) + (r + 1) \left\lfloor \frac{n - k - u}{r - 1} \right\rfloor + (k - u) \bmod (r - 1) + (r + 1) \left\lfloor \frac{k - u}{r - 1} \right\rfloor.
$$
Let $s := (n - k - u) \bmod (r - 1)$ and $t := (k - u) \bmod (r - 1)$. Then the above expression simplifies to 
$$
s + \frac{(r + 1)(n - k - u - s)}{r - 1} + t + \frac{(r + 1)(k - u - t)}{r - 1} = n - 2u + \frac{2(n - 2u - t - s)}{r - 1}.
$$
Note that 
\begin{equation} \label{lili}
    s = (n - k - u) \bmod (r - 1) = \left(\frac{n r}{r + 1} - k\right) \bmod (r - 1),
\end{equation}
where the RHS of \eqref{lili} is congruent to $\left(\frac{n}{r + 1} - k\right) \bmod (r - 1)$, which equals $(u - k) \bmod (r - 1)$.
Also, recall that $t = (k - u) \bmod (r - 1)$, so $s + t = r - 1$. Plugging this in, we get 
$$
n - 2u + \frac{2(n - 2u - t - s)}{r - 1} = n - 2,
$$
and the result follows.
%\end{proof}
%As a consequence of the above lemma, we have the following result.
\subsection{Proof of Lemma~\ref{lem:mult_add}} 
\label{appendix:mult_add}
Since $H$ is finite and non-empty, it is sufficient to show that it is closed under composition. 
Let $f_1(x) = ax + b$ and $f_2(x) = cx + d$ be elements in $H$. Then,
$$
(f_1 \circ f_2)(x) = a(cx + d) + b = acx + ad + b.
$$
Since $ac \in M$ and $ad + b \in B$, we have $f_1 \circ f_2 \in H$, which confirms that $H$ is a subgroup of $\text{AGL}$.

The claim about the order of $H$ is straightforward, as for each $a \in M$, there are $|B|$ distinct transformations of the form $ax + b$ for $b \in B$. Since there are $|M|$ distinct choices for $a$, we conclude that $|H| = |M| \cdot |B|$.

Lastly, for any $\alpha \in \mathbb{F}_q$, the orbit of $\alpha$ under the action of $H$ is given by 
\begin{align*}
    \mathrm{Orb}(\alpha) &= \{f(\alpha) \mid f(x) \in H\} \\
    &= \{a\alpha + b \mid a \in M \text{ and } b \in B\} \\
    &= \bigcup_{a \in M} (a\alpha + B),
\end{align*}
which is a union of cosets of $B$. Now, let $a, c \in M$ be two distinct elements. Then $a\alpha + B = c\alpha + B$ if and only if $\alpha(a - c) \in B$. This holds if and only if $\alpha \in B$, since $a - c \in \mathbb{K}$ is nonzero and $B$ is closed under multiplication by $\mathbb{K}$. Therefore, the size of the orbit is 
$$
|\mathrm{Orb}(\alpha)| =
\begin{cases}
    |B| & \text{if } \alpha \in B, \\
    |H| & \text{if } \alpha \notin B.
\end{cases}
$$

\bibliographystyle{IEEEtranN}
\bibliography{ref}

% Generated by IEEEtranN.bst, version: 1.14 (2015/08/26)
\begin{thebibliography}{30}
\providecommand{\natexlab}[1]{#1}
\providecommand{\url}[1]{#1}
\csname url@samestyle\endcsname
\providecommand{\newblock}{\relax}
\providecommand{\bibinfo}[2]{#2}
\providecommand{\BIBentrySTDinterwordspacing}{\spaceskip=0pt\relax}
\providecommand{\BIBentryALTinterwordstretchfactor}{4}
\providecommand{\BIBentryALTinterwordspacing}{\spaceskip=\fontdimen2\font plus
\BIBentryALTinterwordstretchfactor\fontdimen3\font minus \fontdimen4\font\relax}
\providecommand{\BIBforeignlanguage}[2]{{%
\expandafter\ifx\csname l@#1\endcsname\relax
\typeout{** WARNING: IEEEtranN.bst: No hyphenation pattern has been}%
\typeout{** loaded for the language `#1'. Using the pattern for}%
\typeout{** the default language instead.}%
\else
\language=\csname l@#1\endcsname
\fi
#2}}
\providecommand{\BIBdecl}{\relax}
\BIBdecl

\bibitem[Han and Lastras{-}Monta{\~{n}}o(2007)]{HanL07}
J.~Han and L.~A. Lastras{-}Monta{\~{n}}o, ``Reliable memories with subline accesses,'' in \emph{{IEEE} International Symposium on Information Theory, {ISIT} 2007, Nice, France, June 24-29, 2007}.\hskip 1em plus 0.5em minus 0.4em\relax {IEEE}, 2007, pp. 2531--2535.

\bibitem[Huang et~al.(2007)Huang, Chen, and Li]{HuangCL07}
C.~Huang, M.~Chen, and J.~Li, ``Pyramid codes: Flexible schemes to trade space for access efficiency in reliable data storage systems,'' in \emph{Sixth {IEEE} International Symposium on Network Computing and Applications {(NCA} 2007), 12 - 14 July 2007, Cambridge, MA, {USA}}.\hskip 1em plus 0.5em minus 0.4em\relax {IEEE} Computer Society, 2007, pp. 79--86.

\bibitem[Oggier and Datta(2011)]{OggierD11}
F.~E. Oggier and A.~Datta, ``Self-repairing homomorphic codes for distributed storage systems,'' in \emph{{INFOCOM} 2011. 30th {IEEE} International Conference on Computer Communications, Joint Conference of the {IEEE} Computer and Communications Societies, 10-15 April 2011, Shanghai, China}.\hskip 1em plus 0.5em minus 0.4em\relax {IEEE}, 2011, pp. 1215--1223.

\bibitem[Gopalan et~al.(2012)Gopalan, Huang, Simitci, and Yekhanin]{Gopalan}
P.~Gopalan, C.~Huang, H.~Simitci, and S.~Yekhanin, ``On the locality of codeword symbols,'' \emph{IEEE Transactions on Information theory}, vol.~58, no.~11, pp. 6925--6934, 2012.

\bibitem[Prakash et~al.(2012)Prakash, Kamath, Lalitha, and Kumar]{PrakashKLK12}
N.~Prakash, G.~M. Kamath, V.~Lalitha, and P.~V. Kumar, ``Optimal linear codes with a local-error-correction property,'' in \emph{Proceedings of the 2012 {IEEE} International Symposium on Information Theory, {ISIT} 2012, Cambridge, MA, USA, July 1-6, 2012}.\hskip 1em plus 0.5em minus 0.4em\relax {IEEE}, 2012, pp. 2776--2780.

\bibitem[Papailiopoulos and Dimakis(2014)]{PapailiopoulosD14}
D.~S. Papailiopoulos and A.~G. Dimakis, ``Locally repairable codes,'' \emph{{IEEE} Trans. Inf. Theory}, vol.~60, no.~10, pp. 5843--5855, 2014.

\bibitem[Silberstein et~al.(2013)Silberstein, Rawat, Koyluoglu, and Vishwanath]{SilbersteinRKV13}
N.~Silberstein, A.~S. Rawat, O.~O. Koyluoglu, and S.~Vishwanath, ``Optimal locally repairable codes via rank-metric codes,'' in \emph{Proceedings of the 2013 {IEEE} International Symposium on Information Theory, Istanbul, Turkey, July 7-12, 2013}.\hskip 1em plus 0.5em minus 0.4em\relax {IEEE}, 2013, pp. 1819--1823.

\bibitem[Mazumdar et~al.(2014)Mazumdar, Chandar, and Wornell]{MazumdarCW14}
A.~Mazumdar, V.~Chandar, and G.~W. Wornell, ``Update-efficiency and local repairability limits for capacity approaching codes,'' \emph{{IEEE} J. Sel. Areas Commun.}, vol.~32, no.~5, pp. 976--988, 2014.

\bibitem[Tamo and Barg(2014)]{Tamo}
I.~Tamo and A.~Barg, ``A family of optimal locally recoverable codes,'' \emph{IEEE Transactions on Information Theory}, vol.~60, no.~8, pp. 4661--4676, 2014.

\bibitem[Cadambe and Mazumdar(2015)]{CadambeM15}
V.~R. Cadambe and A.~Mazumdar, ``Bounds on the size of locally recoverable codes,'' \emph{{IEEE} Trans. Inf. Theory}, vol.~61, no.~11, pp. 5787--5794, 2015.

\bibitem[Tamo et~al.(2016{\natexlab{a}})Tamo, Papailiopoulos, and Dimakis]{TamoPD16}
I.~Tamo, D.~S. Papailiopoulos, and A.~G. Dimakis, ``Optimal locally repairable codes and connections to matroid theory,'' \emph{{IEEE} Trans. Inf. Theory}, vol.~62, no.~12, pp. 6661--6671, 2016.

\bibitem[Tamo et~al.(2016{\natexlab{b}})Tamo, Barg, and Frolov]{TamoBF16}
I.~Tamo, A.~Barg, and A.~A. Frolov, ``Bounds on the parameters of locally recoverable codes,'' \emph{{IEEE} Trans. Inf. Theory}, vol.~62, no.~6, pp. 3070--3083, 2016.

\bibitem[Golowich and Guruswami(2023)]{Golowich}
L.~Golowich and V.~Guruswami, ``Quantum locally recoverable codes,'' \emph{arXiv preprint arXiv:2311.08653}, 2023.

\bibitem[Calderbank and Shor(1996)]{calderbank1996good}
A.~R. Calderbank and P.~W. Shor, ``Good quantum error-correcting codes exist,'' \emph{Physical Review A}, vol.~54, no.~2, pp. 1098--1105, 1996.

\bibitem[Steane(1996)]{steane1996error}
A.~M. Steane, ``Error correcting codes in quantum theory,'' \emph{Physical Review Letters}, vol.~77, no.~5, pp. 793--797, 1996.

\bibitem[Ketkar et~al.(2006)Ketkar, Klappenecker, Kumar, and Sarvepalli]{Ketkar}
A.~Ketkar, A.~Klappenecker, S.~Kumar, and P.~K. Sarvepalli, ``Nonbinary stabilizer codes over finite fields,'' \emph{IEEE transactions on information theory}, vol.~52, no.~11, pp. 4892--4914, 2006.

\bibitem[Liu et~al.(2018)Liu, Mesnager, and Chen]{LiuMC18}
J.~Liu, S.~Mesnager, and L.~Chen, ``New constructions of optimal locally recoverable codes via good polynomials,'' \emph{{IEEE} Trans. Inf. Theory}, vol.~64, no.~2, pp. 889--899, 2018.

\bibitem[Micheli(2019)]{Micheli}
G.~Micheli, ``Constructions of locally recoverable codes which are optimal,'' \emph{IEEE transactions on information theory}, vol.~66, no.~1, pp. 167--175, 2019.

\bibitem[Liu et~al.(2020)Liu, Mesnager, and Tang]{LiuMT20}
J.~Liu, S.~Mesnager, and D.~Tang, ``Constructions of optimal locally recoverable codes via dickson polynomials,'' \emph{Designs, Codes and Cryptography}, vol.~88, pp. 1759--1780, 2020.

\bibitem[Chen et~al.(2021)Chen, Mesnager, and Zhao]{RuiKai}
R.~Chen, S.~Mesnager, and C.-A. Zhao, ``Good polynomials for optimal lrc of low locality,'' \emph{Designs, Codes and Cryptography}, vol.~89, no.~7, pp. 1639--1660, 2021.

\bibitem[Ramkumar et~al.(2022)Ramkumar, Balaji, Sasidharan, Vajha, Krishnan, and Kumar]{survey}
V.~Ramkumar, S.~B. Balaji, B.~Sasidharan, M.~Vajha, M.~N. Krishnan, and P.~V. Kumar, ``Codes for distributed storage,'' \emph{Foundations and Trends® in Communications and Information Theory}, vol.~19, no.~4, pp. 547--813, 2022.

\bibitem[Mukhopadhyay et~al.(2022)Mukhopadhyay, Hansda, and Bagchi]{MukhopadhyayHB22}
D.~Mukhopadhyay, K.~Hansda, and S.~Bagchi, ``Some properties and constructions of weakly self dual lrcs,'' in \emph{{IEEE} International Conference on Signal Processing and Communications, {SPCOM} 2022, Bangalore, India, July 11-15, 2022}.\hskip 1em plus 0.5em minus 0.4em\relax {IEEE}, 2022, pp. 1--5.

\bibitem[Luo et~al.(2023)Luo, Chen, Ezerman, and Ling]{luo2023bounds}
G.~Luo, B.~Chen, M.~F. Ezerman, and S.~Ling, ``Bounds and constructions of quantum locally recoverable codes from quantum css codes,'' \emph{arXiv preprint arXiv:2312.11115}, 2023.

\bibitem[Jin(2019)]{Jin19}
L.~Jin, ``Explicit construction of optimal locally recoverable codes of distance 5 and 6 via binary constant weight codes,'' \emph{{IEEE} Trans. Inf. Theory}, vol.~65, no.~8, pp. 4658--4663, 2019.

\bibitem[Chen and Huang(2019)]{ChenH19e}
B.~Chen and J.~Huang, ``A construction of optimal ($r,$ {\(\delta\)})-locally recoverable codes,'' \emph{{IEEE} Access}, vol.~7, pp. 180\,349--180\,353, 2019.

\bibitem[Xing and Yuan(2022)]{XingY22}
C.~Xing and C.~Yuan, ``Construction of optimal ($r,$ {\(\delta\)})-locally recoverable codes and connection with graph theory,'' \emph{{IEEE} Trans. Inf. Theory}, vol.~68, no.~7, pp. 4320--4328, 2022.

\bibitem[Zhang(2020)]{Zhang20a}
G.~Zhang, ``A new construction of optimal (r, {\(\delta\)}) locally recoverable codes,'' \emph{{IEEE} Commun. Lett.}, vol.~24, no.~9, pp. 1852--1856, 2020.

\bibitem[Tamo et~al.(2015)Tamo, Barg, Goparaju, and Calderbank]{TamoBGC15}
I.~Tamo, A.~Barg, S.~Goparaju, and A.~R. Calderbank, ``Cyclic {LRC} codes and their subfield subcodes,'' in \emph{{IEEE} International Symposium on Information Theory, {ISIT} 2015, Hong Kong, China, June 14-19, 2015}.\hskip 1em plus 0.5em minus 0.4em\relax {IEEE}, 2015, pp. 1262--1266.

\bibitem[Luo et~al.(2019)Luo, Xing, and Yuan]{LuoXY19}
Y.~Luo, C.~Xing, and C.~Yuan, ``Optimal locally repairable codes of distance 3 and 4 via cyclic codes,'' \emph{{IEEE} Trans. Inf. Theory}, vol.~65, no.~2, pp. 1048--1053, 2019.

\bibitem[Alon and Chung(1988)]{Alon}
N.~Alon and F.~R. Chung, ``Explicit construction of linear sized tolerant networks,'' \emph{Discrete Mathematics}, vol.~72, no. 1-3, pp. 15--19, 1988.

\end{thebibliography}
 %\end{thebibliography}
\end{document}